\definecolor{winered}{rgb}{0.6,0.1,0.1}
\renewcommand*{\leq}{\leqslant}
\renewcommand*{\geq}{\geqslant}
\renewcommand{\epsilon}{\varepsilon}
\crefname{table}{Table}{Tables}
\Crefname{table}{Table}{Tables}
\crefname{figure}{Figure}{Figures}
\crefname{theorem}{Theorem}{Theorems}
\crefname{definition}{Definition}{Definitions}
\crefname{corollary}{Corollary}{Corollaries}
\crefname{observation}{Observation}{Observations}
\crefname{question}{Question}{Question}
\crefname{lemma}{Lemma}{Lemmas}
\crefname{example}{Example}{Examples}
\crefname{reduction}{Reduction}{Reductions}
\crefname{construction}{Construction}{Constructions}
\crefname{subsection}{Section}{Sections}
\crefname{section}{Section}{Sections}
\crefname{proposition}{Proposition}{Propositions}
\crefname{algorithm}{Algorithm}{Algorithms}
\crefname{algocf}{Algorithm}{Algorithms}
\Crefname{equation}{Inequality}{Inequalities}
\crefname{lstlisting}{listing}{listings}
\newcommand{\myemph}[1]{{\color{winered}\emph{#1}}}
\newcommand{\naturals}{{{\mathbb{N}}}}
\newcommand{\pos}{{{\mathrm{pos}}}}
\renewcommand{\top}{{{\mathrm{top}}}}
\newcommand{\extord}{\vartriangleright}
\newcommand{\dord}{\sqsupset}
\newcommand{\dordeq}{\sqsupseteq}
\newcommand{\low}{{{\mathrm{low}}}}
\newcommand{\grt}{{{\mathrm{grt}}}}
\renewcommand{\top}{{{\mathrm{top}}}}
\renewcommand{\bot}{{{\mathrm{bot}}}}
\renewcommand{\part}{{{\mathrm{part}}}}
\renewcommand{\H}{{{\mathrm{H}}}}
\newcommand{\first}{{{\mathrm{first}}}}
\theoremstyle{definition}
\newtheorem{definition}{Definition}
\newtheorem{example}{Example}
\newtheorem{lemma}{Lemma}
\newtheorem{observation}{Observation}
\theoremstyle{plain}
\newtheorem{theorem}{Theorem}
\newtheorem{corollary}{Corollary}
\newtheorem{question}{Question}
\newtheorem{proposition}{Proposition}
\newtheorem*{rep@theorem}{\rep@title}
\newcommand{\newreptheorem}[2]{%
\newenvironment{rep#1}[1]{%
 \def\rep@title{#2 \ref{##1}}%
 \begin{rep@theorem}}%
 {\end{rep@theorem}}}
\begin{document}
	\title{Core-Stable Committees under Restricted Domains}
\author{Grzegorz Pierczyński\\
  University of Warsaw\\
  \href{mailto:g.pierczynski@mimuw.edu.pl}{g.pierczynski@mimuw.edu.pl}
  \and 
Piotr Skowron\\
  University of Warsaw\\
  \href{mailto:p.skowron@mimuw.edu.pl}{p.skowron@mimuw.edu.pl}
}
\date{}
	\maketitle

\begin{abstract}
We study the setting of committee elections, where a group of individuals needs to collectively select a given size subset of available objects. This model is relevant for a number of real-life scenarios including political elections, participatory budgeting, and facility-location. We focus on the core---the classic notion of proportionality, stability and fairness. We show that for a number of restricted domains including voter-interval, candidate-interval, single-peaked, and single-crossing preferences the core is non-empty and can be found in polynomial time. We show that the core might be empty for strict top-monotonic preferences, yet we introduce a relaxation of this class, which guarantees non-emptiness of the core. Our algorithms work both in the randomized and discrete models. We also show that the classic known proportional rules do not return committees from the core even for the most restrictive domains among those we consider (in particular for 1D-Euclidean preferences). We additionally prove a number of structural results that give better insights into the nature of some of the restricted domains, and which in particular give a better intuitive understanding of the class of top-monotonic preferences.
\end{abstract}

\lstset{
    keywords={input, output, for, while, if, else, return, break},
    comment=[l]{//},
    frame=single,
    mathescape=true,
    float,
    captionpos=b,
    numbers=left,
    breaklines=true,
}

\section{Introduction}

We consider a model of committee elections, where the goal is to select a fixed-size subset of objects based on the preferences of a group of individuals. The objects and the individuals are typically referred to as the \emph{candidates} and the \emph{voters}, respectively, and we follow this convention in our paper. However, the candidates do not need to represent  humans. For example, this model describes
\begin{inparaenum}[(1)]
\item the problem of locating public facilities---there the candidates correspond to possible physical locations where the facilities can be built~\citep{far-hek:b:facility-location,owaWinner}, 
\item the problem of presenting results by a search engine in response to a user query---there, the candidates are web-pages, and voters are potential users searching for a given query~\citep{proprank}, 
\item the problem of selecting validators in the blockchain, where the candidates are the users of the protocol~\citep{cevallos2020verifiably,burdges2020overview}.
\end{inparaenum}
For more examples that fall into the category of committee elections we refer to the recent book chapter~\citep{FSST-trends} and to the recent survey~\citep{lac-sko:abc-survey}.

In numerous applications that fit the model of committee elections it is critical to select a subset of candidates, hereinafter called a \emph{committee}, in a fair and proportional manner. Proportionality is one of fundamental requirements of methods for selecting representative bodies, such as parliaments, faculty boards, etc.  Yet, even in the context of facility location the properties that corresponds to proportionality are desirable: more objects should be built in densely populated area, ideally ensuring that the distribution of the locations of the built facilities resembles the distribution of the locations of the potential users. In case of searching, the returned results should contain items that are interesting to different types of users, or---in other words---the preferences of each minority of users should be represented in the returned set of results. Finally, the validators in the blockchain should proportionally represent the protocol users in order to make validation robust against coordinated attacks of malicious users. In all these examples it is important to select a proportional committee, yet it is not entirely clear what does it mean that the committee proportionally reflects the opinions of the voters, yet alone how to find such a committee.

The problem of formalizing the intuitive idea of proportionality has been often addressed in the literature and a plethora of axioms have been proposed  (see~\citep{lac-sko:abc-survey}[Section~5] and~\citep{FSST-trends}[Section~2.3.3]). Among them, the notion of the core is particularly interesting. This concept borrowed from the cooperative game theory~\citep{RePEc:mtp:titles:0262650401,chalkiadakis2011computational} can be intuitively described as follows. Assume our goal is to select a committee of $k$ candidates based on the ballots submitted by $n$ voters. Then, a group of $\nicefrac{n}{k}$ voters should intuitively have the right to select one committee member, and---analogously---a group of $\ell \cdot \nicefrac{n}{k}$ voters should be able to decide about the $\ell$ members of the elected committee. This intuition is formalized as follows: we say that a committee $W$ is in the core if no group $S$ of $\ell \cdot \nicefrac{n}{k}$ voters can propose a set of $\ell$ candidates $T$ such that each voter from $S$ prefers $T$ over $W$. 

The notion of the core is intuitively appealing, universal, and strong. It applies to different types of voters' ballots, in particular to the \emph{ordinal} and the \emph{approval} ones. In the ordinal model the voters rank the candidates from the most to the least preferred one, while in the approval model, the voters only mark the candidates that they find acceptable---we say that the voters approve such candidates.  Being in the core implies numerous other fairness-related properties, among them properties which are rather strong on their own. For example, in the approval model of voters' preferences the core implies the properties of extended justified representation (EJR)~\citep{justifiedRepresentation}, proportional justified representation (PJR)~\citep{AEHLSS18}, and justified representation (JR)~\citep{justifiedRepresentation}. For ordinal ballots, being in the core implies the properties of unanimity, consensus committee, and solid coalitions~\citep{elk-fal-sko-sli:c:multiwinner-rules}, as well as Dummett's proportionality~\citep{dum:b:voting} and proportionality for solid coalitions~(PSC)~\citep{az-bar:expanding_approval}. In \Cref{sec:max-core} we will also explain that for ordinal ballots the core is equivalent to full local stability~\citep{aziz2017condorcet}.



While core-stability---the property of a voting rule that requires that each elected committee should belong to the core---is highly desired, it is also very demanding. For the ordinal ballots there exists no core-stable rule, and for the approval ballots it is one of the major open problems in computational social choice to find whether the property is satisfiable. Given the property is so demanding, so far the literature focussed on its relaxed versions---either the weaker properties which we mentioned before, or the approximate~\citep{jiang2019approx,pet-sko:laminar} and the randomized~\citep{cheng2019group} variants of the core.

In our work we explore a different, yet related approach. Our point is that before we look at how a voting rule works in the general case, at the very minimum we shall ensure that it behaves well on well-structured preferences. Thus, the main question that we state in this paper is whether the core-stability can be satisfied for certain natural restricted domains of voters' preferences, and what is the computational complexity of finding committees that belong to the core given elections where the voters' preferences come from restricted domains. The idea to restrict the scope only to instances in which the preferences are somehow well-structured is not new~\citep{elkind2017structured}, yet to the best of our knowledge it has never been considered in the context of the core. 

\subsection*{Our Contribution}

Our work contributes to two areas of computational social choice. First, we prove a number of structural theorems that describe existing domain restrictions. In particular, our results give a more intuitive explanation of the class of top-monotonic preferences. The original definition of this class is somewhat cumbersome. We show two independent conditions that provide alternative characterisations of top-monotonic preferences provided the voters' preference rankings have no ties. 
We also introduce two new domain restrictions which are natural, and which provide sufficient conditions for the existence of core-stable rules. One of our new classes generalizes voter-interval and candidate-interval domains~\citep{ijcai/ElkindL15-dichpref}, and the other class is a weakening of the domain of top-monotonic preferences; yet our class still includes single-peaked~\citep{bla:j:rationale-of-group-decision-making} and single-crossing preferences~\citep{mir:j:single-crossing,rob:j:tax}.

Second, we prove the existence of core-stable rules under the assumption that the voters' preferences come from certain restricted domains, in particular from domains of voter-interval, candidate-interval, single-peaked, and single-crossing preferences. Interestingly, we show a single algorithm that is core-stable for all four aforementioned domains. At the same time, we show that there exists no core-stable rule, even if we restrict our attention only to top-monotonic elections. 

The idea of our algorithm is the following. We first find a fractional (randomized) committee that is in the core. We pick those candidates that have been selected with probability equal to one. We choose the remaining candidates using a variant of the median rule applied to the truncated instance of the original election. Thus, our results hold both in the discrete and in the probabilistic case.

\section{Preliminaries}
For each $t\in \naturals$, we set $[t]=\{1, 2, \ldots, t\}$.

\subsection{Elections, Preferences, and Committtees}

An \myemph{election} is a tuple $E=(N, C, k)$, where $N=[n]$ is a set of $n$ \myemph{voters}, $C$ is a set of $m$ \myemph{candidates}, and $k$ is the desired \myemph{committee size}. 
Each voter $i\in N$ submits her weak ranking $\succsim_i$ over the candidates---for each $i\in N$ and $a, b \in C$, we say that voter $i$ weakly prefers candidate $a$ over candidate $b$ if $a \succsim_i b$. We set $a \sim_i b$ if $a \succsim_i b$ and $b \succsim_i a$, and we write $a \succ_i b$ if $a \succsim_i b$ and $a \nsim_i b$.

For a voter $i\in N$ and $j\in [m]$, by $\pos_i(j)$ we denote the equivalence class of candidates ranked on the $j$-th position by voter $i$. Formally, a candidate $c$ belongs to $\pos_i(j)$ if there are $(j-1)$ candidates $a_1, \ldots, a_{j-1}$ such that $a_1 \succ_i a_2 \succ_i \ldots \succ_i a_{j-1} \succ_i c$ and if there exist no $j$ candidates $a_1, \ldots, a_{j}$ for which  $a_1 \succ_i a_2 \succ_i \ldots \succ_i a_{j} \succ_i c$. By $d_i$ we denote the number of the nonempty positions in the $i$-th voter's preference list. For each $j\in [d_i]$, by $\pos_i([j])$ we denote $\bigcup_{q \leq j} \pos_i(q)$, and for convenience, we set $\top_i = \pos_i(1)$ and $\bot_i = \pos_i(d_i)$.

We distinguish two specific types voters' preferences.
\begin{description}
    \item[Approval preferences.] We say that the preferences are \myemph{approval}, if for each voter $i\in N$ it holds that $d_i \leq 2$. We say that $i$ \myemph{approves} $c$ if $c\in \top_i$.
    \item[Strict preferences.] The voters' preferences are \myemph{strict}, if $d_i = m$ for each $i\in N$.
\end{description}

We call $k$-element subsets of $C$ size-$k$ \myemph{committees}, or in short committees, if the size $k$ is clear from the context. We extend this notion to the continuous model as follows.
A \myemph{fractional committee} is a function $p\colon C \mapsto [0;1]$ that assigns to each candidate from $c \in C$ a value $p(c)$ such that $0 \leq p(c) \leq 1$; intuitively $p(c)$ can be thought of as the probability that candidate $c$ is a member of the selected committee. We extend this notation to sets, defining $p(T) = \sum_{c \in T} p(c)$ for each $T \subseteq C$. The value of $p(C)$ is the \myemph{size} of the fractional committee. If for a candidate $c$ it holds that $p(c)=1$, then we say that $c$ is \myemph{elected}, otherwise she is \myemph{unelected}. If for an unelected candidate $c$ it holds that $p(c) > 0$, then $c$ is \myemph{partially elected}. If there are no partially elected candidates in $p$, then we say that $p$ is a \myemph{discrete committee} (or simply a committee) and associate it with the subset of candidates $\{c\in C\colon p(c) = 1\}$.

A \myemph{voting rule} is an algorithm that takes as input an election, and returns a nonempty set of committees, hereinafter called winning committees.\footnote{Typically a voting rule would return a single winning committee, but ties are possible.} A fractional voting rule is an algorithm that given an election returns a fractional committee.

The notion of a fractional committee is similar to several probabilistic concepts considered in the literature. For instance, in probabilistic social choice (see a book chapter by Brandt~\citep{Bran17a}) we also assign fractional values to candidates. The main difference is that in probabilistic social choice, the whole value that we want to divide among the candidates can be assigned to fewer than $k$ candidates; in particular it is feasible to set $p(c) = k$ for one candidate and $p(c') = 0$ for all $c'$, $c' \neq c$. Thus, intuitively, in probabilistic social choice each candidate is divisible and appears in an unlimited quantity. Viewing from this perspective, probabilistic social choice extends the discrete model of approval-based apportionment~\citep{bri-got-pet-sch-wil:approval-apportionment}. Several works have considered axioms of proportionality for probabilistic social choice~\citep{ABM19:fair-mixing,FGM16a}, yet unfortunately their results do not apply to fractional committees.
 
Another concept related to fractional committees is where we assign probabilities to committees instead of individual candidates. The notions of proportionality in this setting have been considered, e.g., by Cheng~et~al.~\citep{cheng2019group}.  
It is worth noting that fractional committees can induce probability distributions over committees, e.g., by applying sampling techniques, such as dependent rounding~\citep{sr:dependent-rounding}, that ensure we always select $k$ candidates. Yet, there is no one-to-one equivalence between the two settings, thus the results of Cheng~et~al.~\citep{cheng2019group} do not apply to fractional committees.

\subsection{The Core as a Concept of Proportionality}\label{sec:the-core}

There are numerous axioms that aim at formalizing the intuitive idea of proportionality. In this paper we focus on one of the strongest such properties, the \myemph{core}~\cite{justifiedRepresentation}. The idea behind the definition of the core is the following: a group of voters $S$ shall be allowed to decide about a subset of candidates that is proportional to the size of $S$; for example a group consisting of 70\% of voters should have the right to decide about 70\% of the elected candidates. The core prohibits situations where a group $S$ can propose a proportionally smaller set of candidates $T$ such that each voter from $S$ would prefer $T$ to the committee at hand.


\begin{definition}[The core]\label{def:core}
Given an election instance $E=(N, C, k)$, we say that a committee $W$ is in the \myemph{core}, if for each $S \subseteq N$ and each subset of candidates $T$ with $|T| \leq k \cdot \nicefrac{|S|}{n}$ there is a voter $v \in S$ weakly preferring $W$ to $T$.
\end{definition}

In the above definition we still need to specify how the voters' compare committees, that is how their preferences over individual candidates can be extended to the preferences over committees. Specifically, for each voter $i \in N$ by $\extord_i$ we denote the partial order over $2^C$ being the result of extending the preference relation of a voter $i$.

Through the whole paper, except for \Cref{sec:max-core}, we use the \myemph{lexicographic} extension:
\begin{align}\label{eq:def-ext-lex}
\begin{split}
W \extord_i T \iff \exists \sigma\in [d_i].\ &|\pos_i(\sigma) \cap W| > |\pos_i(\sigma) \cap T|
\\&\text{ and }\forall \varrho < \sigma.\ |\pos_i(\varrho) \cap W| = |\pos_i(\varrho) \cap T|\text{.}
\end{split}
\end{align}
Note that for approval preferences the lexicographic extension boils down to counting approved candidates in $T$ and $T'$. Since in \Cref{def:core} $|W| \geq |T|$, a voter weakly prefers $W$ over $T$ whenever she approves at least as many candidates in $W$ as in $T$.

Below we generalize the preference extension \eqref{eq:def-ext-lex} to fractional committees: 
\begin{align*}
\begin{split}
p \extord_i p' \iff \exists \sigma\in [d_i].\ &p(\pos_i(\sigma)) > p'(\pos_i(\sigma))
\\&\text{ and }\forall \varrho < \sigma.\ p(\pos_i(\varrho)) = p'(\pos_i(\varrho))\text{.}
\end{split}
\end{align*}

\Cref{def:core} naturally extends to fractional committees. 

\begin{definition}[The core (for fractional committees)]\label{def:fractional-core}
Given an election instance $E=(N, C, k)$, we say that a fractional committee $p$ is in the core, if for each $S \subseteq N$ and each fractional subset $p'$ with $p'(C) \leq k \cdot \nicefrac{|S|}{n}$, there exists a voter $i \in S$ such that $i$ weakly prefers $p$ over $p'$.
\end{definition}

We say that a voting rule is core-stable if it always returns committees in the core.

\section{Restricted domains}\label{sec:restricted_domains}

A voting rule specifies an outcome of an election independently of how the voters' preferences look like.  Similarly, the core puts certain structural requirements on the selected committees that should be satisfied in every possible election. However, the space of all elections is reach and it might be too demanding to expect a voting rule to satisfy a strong property in each possible case. For example, this is the case for the core: there are elections with strict rankings where no committee belongs to the core~\cite{fain2018fair,cheng2019group}; the question whether the core is always non-empty assuming approval preferences is still open. Instead, what is often desired is that a voting rule should satisfy strong notions of proportionality when the voters' preferences are in some sense logically consistent. This motivates focusing primarily on election instances where the voters' preferences are well-structured, or---in other words---come from certain restricted domains.

In this section we describe a few known and introduce one new preference domain. We show that the commonly known voting rules are not core-stable even for the most restricted preference domains. We also provide alternative conditions characterising some of the considered domains. These results will help us in our further analysis of voting methods, but are also interesting on their own.


\subsection{Strict preferences}

For strict ordinal preferences we start by recalling the definitions of the following two known preference classes.

\begin{definition}[Single-crossing preferences]\label{single-crossing}
Given an election instance $E=(N, C, k)$, we say that $E$ has single-crossing preferences if there exists a linear order $\dord$ over voters such that for each voters $x \dord y \dord z$ and candidates $a, b \in C$ such that $a \succ_y b$ we have that $b \succ_x a \implies a \succ_z b$.
\end{definition}

Intuitively, we say that preferences are single-crossing if the voters can be ordered in such a way that for each pair of candidates, $a, b \in C$, the relative order between $a$ and $b$ changes at most once while we move along the voters.

\begin{definition}[Single-peaked preferences]\label{single-peaked}
Given an election instance $E=(N, C, k)$, we say that $E$ has single-peaked preferences if there exists a linear order $\dord$ over candidates such that for each voter $i \in N$ and candidates $a \dord b \dord c$ we have that $\top_i = a \implies b \succ_i c$ and $\top_i = c \implies b \succ_i a$.
\end{definition}

\begin{definition}[1D-Euclidean preferences]
Given an election instance $E=(N, C, k)$, we say that $E$ has 1D-Euclidean preferences if there exists a 1D-Euclidean metric space in which both voters and candidates are located, such that each voter $i\in N$ prefers a candidate $a$ to a candidate $b$ if and only if $a$ is closer to $i$ than $b$.
\end{definition}

Every 1D-Euclidean election is both single-peaked and single-crossing. 

\subsubsection{Known Voting Rules are not Core-Stable for 1D-Euclidean Preferences}

To the best of our knowledge, none of the known voting rules is core-stable, even for 1D-Euclidean elections. We prove this for two archetypal proportional rules, the Monroe's rule and STV.

\begin{definition}[The Monroe Rule]
Consider an election $E$ with strict preferences and assume that $\nicefrac{n}{k}$ is integral. For a committee $T\subseteq C$, a \emph{balanced matching} is a collection of subsets of voters $\{N_c\}_{c\in T}$ such that for every $c\in T$, $|N_c| = \nicefrac{n}{k}$. The \emph{value} of a matching is the sum $\sum_{c\in T} \sum_{i\in N_c} \pos_i(\{c\})$. The matching $\{N_c\}_{c\in T}$ is minimal, if it has the minimal value among all matchings for $T$. The Monroe Rule returns the committee $W$ minimizing the value of the minimal balanced matching.
\end{definition}

\begin{definition}[Single Transferable Vote (STV)]
Consider an election $E$ with strict preferences. STV proceeds sequentially: at each round we elect a candidate that is ranked top by at least $\nicefrac{n}{k+1}+1$ voters and remove any $\nicefrac{n}{k+1}+1$ of these voters from the election. If there are no such candidates, we remove from the election a candidate ranked top by the least number of voters.
\end{definition}

Both the Monroe Rule and STV are not core-stable even for 1D-Euclidean instances, as shown in \Cref{ex:monroe} and \Cref{ex:stv}, respectively.

\begin{example}\label{ex:monroe}
Let $k=2$. Voters' preferences are the following:
\begin{align*}
v_1\colon b \succ a \succ c \succ d \succ e\\
v_2\colon c \succ b \succ d \succ a \succ e\\
v_3\colon c \succ d \succ b \succ e \succ a\\
v_4\colon d \succ e \succ c \succ b \succ a
\end{align*}
This instance is 1D-Euclidean as presented in \Cref{fig:ex:monroe}. 

Here the committee $\{b, d\}$ is elected by Monroe, but group $S$ consisting of middle voters $\{2, 3\}$ and $T=\{c\}$ witness the core violation.
\end{example}

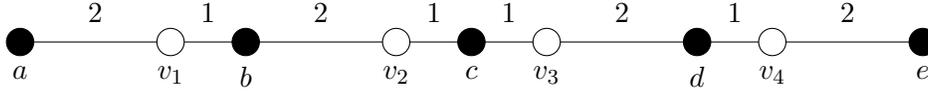
\begin{figure}
    \centering
    \begin{tikzpicture}
    \draw (0,0) -- (12,0);
    
    \node[label=above:$2$] at (1,0) {};
    \node[label=above:$1$] at (2.5,0) {};
    \node[label=above:$2$] at (4,0) {};
    \node[label=above:$1$] at (5.5,0) {};
    \node[label=above:$1$] at (6.5,0) {};
    \node[label=above:$2$] at (8,0) {};
    \node[label=above:$1$] at (9.5,0) {};
    \node[label=above:$2$] at (11,0) {};

    \node[circle, draw=black, fill=black, label=below:$a$] at (0,0) {};
    \node[circle, draw=black, fill=white, label=below:$v_1$] at (2,0) {};
    \node[circle, draw=black, fill=black, label=below:$b$] at (3,0) {};
    \node[circle, draw=black, fill=white, label=below:$v_2$] at (5,0) {};
    \node[circle, draw=black, fill=black, label=below:$c$] at (6,0) {};
    \node[circle, draw=black, fill=white, label=below:$v_3$] at (7,0) {};
    \node[circle, draw=black, fill=black, label=below:$d$] at (9,0) {};
    \node[circle, draw=black, fill=white, label=below:$v_4$] at (10,0) {};
    \node[circle, draw=black, fill=black, label=below:$e$] at (12,0) {};
    \end{tikzpicture}
    \caption{An illustration of \Cref{ex:monroe}. White and black points mean the positions of respecitvely the voters and the candidates.}
    \label{fig:ex:monroe}
\end{figure}

\begin{example}\label{ex:stv}
Let $n=60$, $k=2$. The value of the STV quota is $\nicefrac{n}{k+1}+1=21$. Voters' preferences are divided into 5 groups:
\begin{alignat*}{2}
&G_1\text{ }(18~\text{voters})\colon& a \succ b \succ c \succ d \succ e\\
&G_2\text{ }(7~\text{voters})\colon& b \succ c \succ d \succ e \succ a\\
&G_3\text{ }(5~\text{voters})\colon& c \succ d \succ e \succ b \succ a\\
&G_4\text{ }(16~\text{voters})\colon& d \succ e \succ c \succ b \succ a\\
&G_5\text{ }(14~\text{voters})\colon& e \succ d \succ c \succ b \succ a
\end{alignat*}
This instance is 1D-Euclidean as presented in \Cref{fig:ex:stv}. 

Here candidate $c$ is eliminated at the first round and all votes for her are transferred to $d$. Second, candidate $d$ is elected (in the second round she gains exactly $21$ votes) and the votes from groups $3$ and $4$ are removed. Third, candidate $b$ is eliminated and all votes for her are transferred to $e$. Fourth, candidate $e$ is elected (gaining in the final round exactly $21$ votes) and the committee $\{d, e\}$ is returned. However, $30$ voters from the three first groups and candidate $c$ witness the core violation.
\end{example}

\begin{figure}
    \centering
    \begin{tikzpicture}
    \draw (0,0) -- (15,0);
    \node[circle, draw=black, fill=black, label={[align=center]below:$G_1$\\$a$}] at (0,0) {};
    \node[circle, draw=black, fill=black, label={[align=center]below:$G_2$\\$b$}] at (8,0) {};
    \node[circle, draw=black, fill=black, label={[align=center]below:$G_3$\\$c$}] at (12,0) {};
    \node[circle, draw=black, fill=black, label={[align=center]below:$G_4$\\$d$}] at (14,0) {};
    \node[circle, draw=black, fill=black, label={[align=center]below:$G_5$\\$e$}] at (15,0) {};
    
    \node[label=above:$8$] at (4,0) {};
    \node[label=above:$4$] at (10,0) {};
    \node[label=above:$2$] at (13,0) {};
    \node[label=above:$1$] at (14.5,0) {};
    
    \end{tikzpicture}
    \caption{An illustration of \Cref{ex:stv}. Black points mean the positions of both candidates and groups of voters (ties can be broken arbitrarily).}
    \label{fig:ex:stv}
\end{figure}
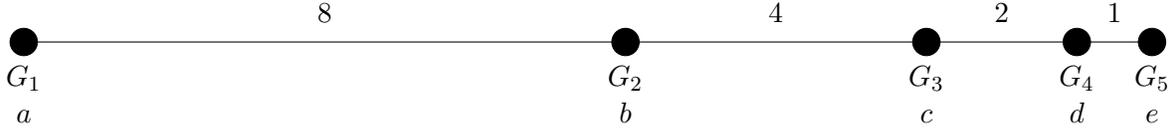

\subsubsection{Top Monotonic Preferences}


We will now recall the definition of the \myemph{top monotonic} domain~\citep{BARBERA2011345}. This domain is defined assuming the voters submit their preferences as weak orders. For strict preferences it  generalizes both the single-peaked and single-crossing domains. We call all candidates that are ranked in the position by at least one voter top-candidates. 

\begin{definition}[Top monotonicity (TM)]\label{def:top_monotonicity}
Given an election $E=(N, C, k)$, we say that $E$ has \myemph{top monotonic} preferences if there exists a linear order $\dord$ over candidates such that the two following conditions hold:
\begin{itemize}
    \item for each candidates $a, b, c$ and voters $i, j$ such that $a\in \top_i$ and $b\in \top_j$ it holds that:
    \begin{align*}
    \begin{split}
        (a \dord b \dord c \text{ or } c \dord b \dord a) \implies &b \succsim_i c \text{ if } c \in \top_i \cup \top_j\\
        &b \succ_i c \text{ otherwise}
    \end{split}
    \end{align*}
    \item the same implication holds also for each top candidates $a, b, c$ and voters $i, j$ such that $a \succsim_i b, c$ and $b \succsim_j a, c$.
\end{itemize}
\end{definition}

The definition of top-monotonic preferences is complex and somewhat counterintuitive. We will first show that for strict orders this definition can be equivalently characterized by two much simpler and more intuitive conditions.

\begin{definition}[Single-top-peaked (STP) preferences]
Given an election $E=(N, C, k)$, we say that $E$ has \myemph{single-top-peaked} preferences if there exists a linear order $\dord$ over candidates such that for each candidates $a \dord b \dord c$ such that $b$ is a top candidate, and a voter $i$ it holds that $\top_i = a \implies b \succ_i c$ and $\top_i = c \implies b \succ_i a$.
\end{definition}

\begin{proposition}
In the strict model, single-top-peakedness is equivalent to top-monotonicity.
\end{proposition}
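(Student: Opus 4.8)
The plan is to prove both implications using a \emph{single} linear order: I will show that any order $\dord$ witnessing single-top-peakedness also witnesses top-monotonicity, and conversely. Two observations make the strict model tractable. First, since preferences are strict, each $\top_i$ is a singleton, so ``$a\in\top_i$'' simply means $\top_i=a$. Second, whenever the order hypothesis of \Cref{def:top_monotonicity} (namely $a\dord b\dord c$ or $c\dord b\dord a$) holds, the three candidates $a,b,c$ are pairwise distinct; combined with the preference inequalities of each condition this will force $c\notin\top_i\cup\top_j$, so \emph{both} bullet points of top-monotonicity collapse to the single strict requirement $b\succ_i c$.

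For the direction from top-monotonicity to single-top-peakedness I would simply specialise the first condition of \Cref{def:top_monotonicity}. Given $a\dord b\dord c$ with $b$ a top candidate, choose a voter $j$ with $\top_j=b$. If $\top_i=a$, the condition applied to the triple $(a,b,c)$ and voters $(i,j)$ yields $b\succ_i c$, since $c\notin\{a,b\}$. If instead $\top_i=c$, I apply the condition with $c$ in the role of the first candidate and $a$ in the role of the third (the order disjunct $a\dord b\dord c$ holds), obtaining $b\succ_i a$. Together these are exactly the two requirements of single-top-peakedness.

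The substantive direction is from single-top-peakedness to top-monotonicity, and by the reduction above it suffices, for each of the two conditions, to derive $b\succ_i c$. The first condition has $\top_i=a$, so it follows immediately from the defining implication of single-top-peakedness applied to the triple $(a,b,c)$ in the appropriate orientation, using that $b$ is a top candidate. The main obstacle is the second condition: there we only know that $a\succsim_i b,c$ and $b\succsim_j a,c$ for top candidates $a,b,c$, so the peak $\top_i$ need \emph{not} be any of $a,b,c$, and its position along $\dord$ is a priori unknown.

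I will overcome this with a positional argument that pins down $\top_i=:t$ relative to $b$. Assume $a\dord b\dord c$ (the case $c\dord b\dord a$ is symmetric, swapping the two ends). Since $a\succ_i b$ while $t$ is $i$'s top, $t\ne b$. If $b\dord t$, then $t$ lies at the right end of the triple $a\dord b\dord t$ whose middle $b$ is a top candidate, so single-top-peakedness forces $b\succ_i a$, contradicting $a\succ_i b$; hence $t\dord b$. But then $t$ lies at the left end of the triple $t\dord b\dord c$ with top middle $b$, and single-top-peakedness yields precisely $b\succ_i c$. Finally I will record that distinctness of $a,b,c$ together with $a\succsim_i c$ and $b\succsim_j c$ forces $c\notin\top_i\cup\top_j$, so the strict conclusion $b\succ_i c$ is indeed what the definition demands, completing the equivalence.
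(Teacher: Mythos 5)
Your proof is correct and follows essentially the same route as the paper's: the direction from top-monotonicity to single-top-peakedness is a direct specialization of the first TM condition, and the converse handles the second condition by exactly the paper's case analysis on the position of voter $i$'s peak $t$ relative to $b$ (if $b \dord t$, STP on the triple $a \dord b \dord t$ gives $b \succ_i a$, a contradiction; otherwise STP on $t \dord b \dord c$ gives $b \succ_i c$). Your explicit bookkeeping of distinctness and of why $c \notin \top_i \cup \top_j$ forces the strict conclusion is merely a more detailed rendering of what the paper states tersely, so the substance is identical.
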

\begin{proof}
Observe that the first condition in the definition of  TM implies STP. Now, we will show the reverse implication.
Consider an STP election. We will show that it satisfies the two conditions specified in \Cref{def:top_monotonicity}.

Note that in the strict model if the premise of the first condition is satisfied, then $\top_i = \{a\}$ and $\top_j = \{b\}$ and $c\notin \top_i \cup \top_j$. Hence, the first condition follows from the condition for STP.

Consider now the second condition. If $a\succsim_i b, c$ and $b\succsim_j a, c$, then in the strict model it holds that $a\succ_i b, c$ and $b\succ_j a, c$. Let us consider two cases: first assume that $a\dord b \dord c$. We know that $\top_i=\{d\}$ for some $d\in C\setminus\{b, c\}$. If $d \dord b$, then $b\succ_i c$ follows from the definition of STC (for voter $i$ and candidates $d, b,c$). Suppose now that $b \dord d$. But then from the definition of STP (for voter $i$ and candidates $a, b, d$) we obtain that $b \succ_i a$, a contradiction. The reasoning for the case when $c\dord b \dord a$ is analogous.
\end{proof}

It is clear that the definition of STP is closely connected to the definition of single-peaked preferences (only the condition is partially weakened to the candidates that are ranked top by some voter). One could also consider the analogous weakening for single-crossing preferences.

\begin{definition}[Single-top-crossing (STC) preferences]
Given an election $E=(N, C, k)$, we say that $E$ has \myemph{single-top-crossing} preferences if there exists a linear order $\dord$ over voters such that for each voters $x \dord y \dord z$ and a candidate $a\in C$, we have that $a \succ_x \top_y \implies \top_y \succ_z a$.
\end{definition}

Although the definitions of STC and STP look different, they are in fact equivalent.

\begin{proposition}\label{prop:stc_and_stp}
In the strict model, single-top-peakedness is equivalent to single-top-crossingness.
\end{proposition}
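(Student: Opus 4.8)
The plan is to prove both implications by transforming a witnessing order for one condition into a witnessing order for the other. Throughout I use that in the strict model each $\top_i$ is a single candidate. Write $\dord_C$ for a candidate order witnessing single-top-peakedness and $\dord_V$ for a voter order witnessing single-top-crossingness.

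For the direction STP $\Rightarrow$ STC, I would take the candidate order $\dord_C$ and order the voters so that $x \dord_V y$ whenever $\top_x \dord_C \top_y$, breaking ties among voters with the same top arbitrarily; this makes $v \mapsto \top_v$ monotone, so $x \dord_V y \dord_V z$ forces $\top_x \dordeq_C \top_y \dordeq_C \top_z$. To check STC, fix such $x,y,z$ and a candidate $a$ with $a \succ_x \top_y$. If $\top_x = \top_y$ the premise fails (a voter's top is its strict maximum), and if $\top_y = \top_z$ the conclusion $\top_y \succ_z a$ is immediate; so assume $\top_x \dord_C \top_y \dord_C \top_z$. Writing $b = \top_y$, I first rule out $b \dord_C a$: otherwise $\top_x \dord_C b \dord_C a$ with $b$ a top candidate and $\top_x$ the left endpoint, so STP gives $b \succ_x a$, contradicting $a \succ_x b$. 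Hence $a \dord_C b \dord_C \top_z$, and STP applied with $\top_z$ as the right endpoint yields $b \succ_z a$, as required.

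The harder direction is STC $\Rightarrow$ STP. Starting from $\dord_V$, I first observe that each top candidate occupies a contiguous block of voters: if $\top_x = \top_z = t$ but $\top_y = t' \neq t$ for $x \dord_V y \dord_V z$, then $t \succ_x t' = \top_y$ forces $t' \succ_z t$ by STC, contradicting $\top_z = t$. Listing the distinct tops in block order gives $t_1, \dots, t_r$, which I declare to be the $\dord_C$-order of the top candidates; a short STC argument shows this order is the right one, namely that every voter strictly before block $B_i$ prefers $t_i$ to each later $t_j$, and every voter strictly after $B_j$ prefers $t_j$ to each earlier $t_i$ (a violating voter, combined with a voter of the intervening block, contradicts STC). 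It remains to place the non-top candidates, and here the key simplification is that in any non-vacuous STP constraint the middle candidate and at least one endpoint are tops, so non-tops never need to be compared with one another; I only need to slot each non-top $d$ among the $t_j$. For fixed $d$ and top $t_j$, STC implies the set of voters preferring $d$ to $t_j$ cannot meet both sides of $B_j$, so I declare $d$ to be left of $t_j$ when these voters all precede $B_j$ and right of $t_j$ when they all follow it (choosing consistently when the set is empty).

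Combining the two monotonicity facts about the tops with transitivity, I would show these declarations are monotone in $j$: a right-witness $z$ for $t_j$ satisfies $d \succ_z t_j \succ_z t_i$ for every earlier $t_i$, hence witnesses $d$ right of $t_i$ as well, and symmetrically on the left; this gives $d$ a well-defined slot. Verifying STP then reduces to a few routine checks: the all-tops instances follow directly from the block-order facts above, and an instance with $d$ as the far endpoint to the right (resp. left) of a top $b$ holds because the constrained voter lies on the opposite side of $b$'s block from $d$'s supporters and therefore does not prefer $d$ to $b$ (the empty-support case giving $b \succ d$ trivially). The main obstacle is precisely this last direction, and within it the monotonicity of the non-top declarations together with the bookkeeping needed to place the non-top candidates and discharge the empty-support ties cleanly.
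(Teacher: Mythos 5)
Your proposal is correct and follows essentially the same route as the paper's proof: for STP $\Rightarrow$ STC you order the voters by their top candidates and run the same two-case analysis (ruling out $\top_y \dord a$ via the left-endpoint STP constraint, then applying the right-endpoint constraint), exactly as the paper does. For STC $\Rightarrow$ STP, your block decomposition of tops and the slot placement of each non-top $d$ via the one-sidedness of the voters preferring $d$ to a top candidate is just a more explicit, constructive rendering of the paper's argument, which adds the same pairwise constraints (oriented by which side of the top's supporters the ``crossing'' voters lie on), verifies transitivity of the constraints, and completes the order arbitrarily on the unconstrained (non-top/non-top) pairs.
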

\begin{proof}
Consider an STC election $E$ and a linear order $\dord$ over voters given by the definition of STC. We say that $i$ preceds $j$ if $j \dord i$. We construct the linear order over candidates as follows:
\begin{enumerate}
    \item Consider some $a, b \in C$ such that $a$ is the top preference for some voter $i\in N$. From the definition of STC, we know that voters preferring $b$ to $a$ can all either succeed or precede~$i$. If they succeed $i$, then we add constraint $b \dord a$, otherwise we add constraint $a \dord b$. If there are no voters prefering $b$ to $a$, we add no constraint. We repeat this step for each pairs $a, b \in C$.
    \item Finally, if after the previous step some pairs are still uncomparable, we complete the order in any transitive way.
\end{enumerate}

We will show that the constraints placed during the first step of the procedure are transitive. Indeed, consider (for the sake of contradiction) three candidates $a, b, c$ such that the procedure placed constraints $a \dord b$, $b \dord c$ and $c \dord a$. Hence, we know that at least two out of these three candidates are top candidates. Assume without the loss of generality that $a$ and $b$ are top candidates. Let $i_a, i_b$ be voters ranking top respectively $a$ and $b$ (naturally, $i_a \dord i_b$). We know that all the voters preceding $i_a$ prefer $a$ to $c$ and all the voters preceding $i_b$ prefer $c$ to $b$. There exists at least one voter $i$ preferring $c$ to $b$ (as otherwise constraint $b \dord c$ would not be added) and $i_b \dord i$. By transitivity of the preference relation, we know that $i$ prefers $a$ over $b$. Consequently, $i_a, i_b$ and $i$ together with candidate $a$ witness STC violation. The obtained contradiction shows that the order $\dord$ is indeed transitive.

We will now prove that such linear order $\dord$ over candidates satisfies the conditions of STP. Indeed, consider any three candidates $a \dord b \dord c$ such that $b$ is a top candidate and a voter $i\in N$. Let $\top_i = \{a\}$. As $b$ is a top candidate, there exists a voter $j$ such that $\top_j=\{b\}$. As $a \dord b$, it holds that $i \dord j$. Then if we had that $c \succ_i b$, our procedure would place constraint $c \dord b$, a contradiction. Hence $b \succ_i c$. The proof for the case $\top_i = \{c\}$ is analogous.

Now we will prove the reverse implication. Let $E$ be an STP election with a linear order $\dord$ over the candidates. Consider the following linear order $\dord$ over the voters: for each $i, j\in N$ we have that if $\top_i \dord \top_j$ then $i \dord j$.  Now consider three voters $x, y, z$ and a candidate $a$ such that $a \succ_x \top_y$. Suppose that $a \dord \top_y$. Then from the properties of top monotonocity and the fact that $\top_y \dord \top_z$, we have that $z$ has preference ranking $\top_z \succ_z \top_y \succ_z a$. Suppose now that $\top_y \dord a$. But since $\top_x \dord \top_y \dord a$, the fact that $a \succ_x \top_y$ leads to the contradiction with the definition of top monotonicity, which completes the proof.
\end{proof}

Recall that single-crossingness implies single-peakedness for narcissist domains, i.e., under the assumption that each candidate is ranked top at least once~\cite{elk-fal-sko:characterizationOfSingleCrossing}. Since for narcissist domains a single peaked profile is also single-top-peaked, we get a related result: that single-peakedness is equivalent to single-top-crossingness assuming narcissist preferences.

The class of top monotonic preferences (TM) puts a focus on the top positions in the voters' preference rankings. For example, an election in which the voters unanimously rank a single candidate as their most preferred choice is top-monotonic, independently of how the other candidates are ranked. This suggests that TM offers a combinatorial structure that might be useful in the analysis of single-winner elections, but which might not help to reason about committees. Indeed, below we define a new class which is a natural strengthening of TM. In \Cref{sec:algorithm} we show that the core is always nonempty for elections belonging to our newly defined class, and we show that this is not the case for the original class of TM.

\begin{definition}[Recursive single-top-crossing (r-STC) preferences]
Given an election $E=(N, C, k)$, we say that $E$ has \myemph{recursive single-top-crossing} preferences if every subinstance of $E$ obtained by removing some candidates from $E$ is STC.
\end{definition}

Although r-STC is stricter than STC, it still contains both single-peaked and single-crossing preferences. This follows from the fact that both single peaked and single-crossing preferences are top monotonic~\citep{BARBERA2011345}, and that single-peakedness and single-crossingness is preserved under the operation of removing candidates from the election.

\subsection{Approval elections}

In the approval model we first recall the definitions of two classic domain restrictions, the voter-interval and the candidate-interval models~\cite{ijcai/ElkindL15-dichpref}.

\begin{definition}[Voter-interval (VI) preferences]\label{voter-interval}
Given an election instance $E=(N, C, k)$, we say that $E$ has \myemph{voter-interval} preferences if there exists a linear order $\dord$ over $N$ such that for all voters $v_1, v_2, v_3\in N$ and for each candidate $c \in \top_{v_1} \cap \top_{v_3}$, we have that $v_1 \sqsupset v_2 \sqsupset v_3 \implies c \in \top_{v_2}$. Intuitively, each candidate is approved by a consistent interval of voters.
\end{definition}

\begin{definition}[Candidate-interval (CI) preferences]\label{candidate-interval}
Given an election instance $E=(N, C, k)$, we say that $E$ has \myemph{candidate-interval} preferences if there exists a linear order $\dord$ over $C$ such that for each voter $i \in N$ and all candidates $a, c \in \top_i, b \in C$ we have that $a \dord b \dord c \implies b \in \top_i$. Intuitively, each voter approves a consistent interval of candidates.
\end{definition}

\subsubsection{Known Voting Rules are not Core-Stable for VI nor CI Preferences}

As in the case of strict preferences, we first show that two known proportional voting rules are not core stable even if the preferences come from the above restricted domains. We focus on known rules that satisfy extended justified representation (EJR)~\cite{justifiedRepresentation}, one of the strongest proportionality axioms that are known to be satisfiable in general.

\begin{definition}[Proportional Approval Voting (PAV)~\cite{Thie95a}]\label{def:pav}
Given an election instance $E=(N, C, k)$, we elect a committee $W$ maximizing the value of the following expression:
\begin{equation*}
    \sum_{i\in N}\H(|W \cap \top_i|)  \quad \text{where} \quad \H(i) = 1 + \frac{1}{2} + \ldots + \frac{1}{i} \text{.}
\end{equation*}
\end{definition}

\begin{definition}[Rule X~\cite{pet-sko:laminar}]\label{def:rulex}
We assume that each voter is given $1$ dollar at the beginning. Every candidate needs to be paid $\nicefrac{n}{k}$ dollars to be elected. The algorithm is sequential. At each round, we iterate over candidates and for each candidate $c$ we compute the value $\varrho_c$---the lowest value such that the voters approving $c$ can afford her election (i.e., can afford to pay $\nicefrac{n}{k}$ dollars in total) provided each voter pays at most $\varrho_c$. Then we elect the affordable candidate minimising $\varrho_c$, decrease the voters' budgets and repeat the procedure until there are no affordable candidates.
\end{definition}


Both these rules are not core-stable even for elections belonging to the intersection of VI and CI classes, as shown in \Cref{ex:pav} and \Cref{ex:rulex}.

\begin{example}\label{ex:pav}
Let $n=3$, $k=8$. Voters' preferences are the following:
\begin{align*}
& v_1\colon\{b_1, b_2, b_3, b_4, a\}\\
& v_2\colon\{b_1, b_2, b_3, b_4, c\}\\
& v_3\colon\{d_1, d_2, d_3, d_4\}
\end{align*}
Assuming
\begin{align*}
    v_1 \dord v_2 \dord v_3
\end{align*}
and
\begin{align*}
    a \dord b_1 \dord \ldots \dord b_4 \dord c \dord d_1 \dord \ldots \dord d_4,
\end{align*}
it is clear that the instance is both VI and CI.

Here PAV elects candidates $\{b_1, \ldots, b_4, d_1, \ldots, d_4\}$. However, this committee does not belong to the core, which is witnessed by the groups $S=\{v_1, v_2\}$ and $T=\{a, b_1, \ldots, b_4, c\}$.
\end{example}

\begin{example}\label{ex:rulex}
Let $n=42$, $k=14$. Voters' preferences are divided into the following groups:
\begin{alignat*}{2}
&G_1 \text{ }(1~\text{voter})  \colon&& \{c_1, c_2, c_3 , x_1, x_2\}\\
&G_2 \text{ }(8~\text{voters}) \colon&& \{c_1, c_2, c_3, x_1, x_2, a_1, a_2, a_3, a_4\}\\
&G_3 \text{ }(12~\text{voters}) \colon&& \{c_1, c_2, c_3, a_1, a_2, a_3, a_4, b_1, b_2, b_3, b_4, e_1, e_2\}\\
&G_4 \text{ }(12~\text{voters}) \colon&& \{d_1, d_2, d_3, b_1, b_2, b_3, b_4, a_1, a_2, a_3, a_4, e_1, e_2\}\\
&G_5 \text{ }(8~\text{voters}) \colon&& \{d_1, d_2, d_3, y_1, y_2, b_1, b_2, b_3, b_4\}\\
&G_6 \text{ }(1~\text{voter}) \colon&& \{d_1, d_2, d_3, y_1, y_2\}
\end{alignat*}
Assuming
\begin{align*}
    G_1 \dord G_2 \dord \ldots \dord G_6 \text{ (the voters within each group can be ordered arbitrarily)}
\end{align*}
and
\begin{align*}
    x_1 \dord x_2 \dord c_1 \dord c_2 \dord c_3 \dord a_1 \dord \ldots \dord a_4 \dord e_1 \dord e_2 \dord b_1 \dord \ldots \dord b_4 \dord d_1 \dord d_2 \dord d_3 \dord y_1 \dord y_2 
\end{align*}
it is clear that the instance is both VI and CI.

At the beginning each voter has $1$ dollar and the price for candidates is $p=\nicefrac{n}{k} = 3$. First we elect candidates $a_1, \ldots, a_4, b_1, \ldots, b_4$; for each of them $32$ out of $40$ middle voters pay (each of them pays~$\nicefrac{3}{32}$). Second, we elect candidates $e_1$ and $e_2$, and the middle $24$ voters run out of money; indeed, each of them pays $8 \cdot \nicefrac{3}{32} + 2 \cdot \nicefrac{3}{24} = 1$ for the so far elected candidates. Next we elect candidates $x_1, x_2, y_1, y_2$. We have elected the committee $\{a_1, \ldots, a_4, b_1, \ldots, b_4, e_1, e_2, x_1, x_2, y_1, y_2\}$ which is not even Pareto-optimal as the committee $\{a_1, \ldots, a_4, b_1, \ldots, b_4, c_1, c_2, c_3, d_1, d_2, d_3\}$ is better for every voter. Thus, in particular the elected committee does not belong to the core.
\end{example}

\subsubsection*{Linearly consistent (LC) preferences}


Below we introduce a new class that generalizes both CI and VI domains. In \Cref{sec:algorithm-app} we will prove that the core is always nonempty if preferences come from our new restricted domain.

\begin{definition}[Linearly consistent (LC) preferences]\label{def:lc}
Given an election instance $E=(N, C, k)$, we say that $E$ has linearly consistent preferences, if there exists a linear order $\dord$ over $N\cup C$ such that for each voters $i, j \in N$ ($i \dord j$) and candidates $a, b\in C$ ($a \dord b$), if $a \in \top_j$, then $a \succsim_i b$. In words, if $i$ approves $b$ and $j$ approves $a$, then $i$ approves $a$ (this intuition is depicted in \Cref{fig:lc}).
\end{definition} 

\begin{figure}
    \centering
    \begin{tikzpicture}
    \node at (-2, 0) {candidates:};
    \node (A) at (0,0) {$a$};
    \node at (2,0) {$\dord$};
    \node (B) at (4,0) {$b$};
    
    \node at (-2, 5) {voters:};
    \node (I) at (0,5) {$i$};
    \node at (2,5) {$\dord$};
    \node (J) at (4,5) {$j$};
    
    \draw [-{Stealth[scale=2]}, dashed] (I) -- (A) node [pos=.3, below, sloped] {\scriptsize{needs to approve}};
    \draw [-{Stealth[scale=2]}] (I) -- (B) node [pos=.3, below, sloped] {\scriptsize{approves}};
    \draw [-{Stealth[scale=2]}] (J) -- (A) node [pos=.3, below, sloped] {\scriptsize{approves}};
    
    \end{tikzpicture}
    \caption{An illustration of the definition of LC.}
    \label{fig:lc}
\end{figure}
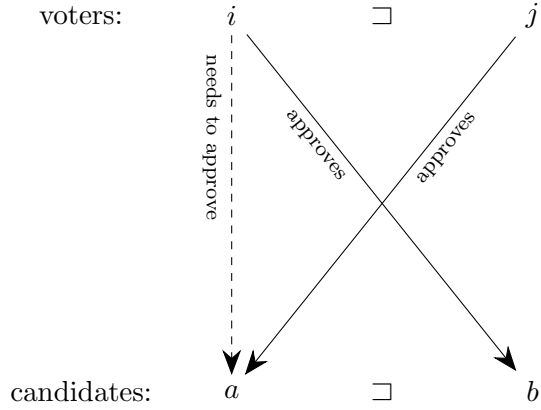

\begin{proposition}\label{obs:ci-vi-implies-lc}
Each VI election is LC. Each CI election is LC.
\end{proposition}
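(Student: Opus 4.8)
The plan is to prove each implication separately by constructing the linear order $\dord$ over $N \cup C$ required by \Cref{def:lc} from the order witnessing VI (respectively CI), and then verifying the LC condition directly. In both cases the voters and candidates separately already carry a witnessing linear order; the task is to \emph{interleave} these two orders into a single order on $N \cup C$ and to check that the interleaving respects the approval structure. The main content is that VI and CI each constrain the approval sets to be ``intervals'' in a way that is compatible with such an interleaving.

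First I would handle the CI case, which I expect to be the more transparent of the two. Given a CI election, fix the linear order $\dord_C$ over $C$ witnessing candidate-interval, so that every voter approves a contiguous interval of candidates. I would then assign to each voter a position relative to the candidates---intuitively, place voter $i$ at (say) the left endpoint of her approval interval, or more carefully interleave voters among candidates so that if $\top_i$ lies to the left of $\top_j$ in the $\dord_C$ order then $i \dord j$. The verification then reduces to the following: take $i \dord j$ and $a \dord b$ with $a \in \top_j$ and $b \in \top_i$; I must show $a \in \top_i$. Since $b \in \top_i$ and $a \dord b$, and since $a \in \top_j$ with $i$ placed to the left of $j$, the interval approved by $i$ must already reach $a$---I would make this precise by comparing $a$ to the endpoints of $i$'s approval interval, using that $i$ approves $b$ (which is to the right of $a$) together with the fact that $i$ is positioned no further right than $j$, who approves $a$. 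The candidate-interval property (contiguity of each voter's approved set) is exactly what forces $a$ into $i$'s interval.

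For the VI case the argument is dual: I would fix the linear order $\dord_N$ over $N$ witnessing voter-interval, under which every candidate is approved by a contiguous interval of voters, and then build the combined order by placing each candidate among the voters according to, say, the span of voters approving it. The LC condition $a \in \top_j \wedge b \in \top_i \Rightarrow a \in \top_i$ (for $i \dord j$, $a \dord b$) would then be verified using that the set of voters approving $a$ is an interval in $\dord_N$: since $j$ approves $a$ and $i$ sits to the left of $j$ in the combined order, I would argue that $i$ also falls within $a$'s interval of approvers, invoking contiguity together with the placement rule that ties the candidate order $a \dord b$ to the voter intervals.

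The hard part will be setting up the interleaved order on $N \cup C$ so that the two ``ordering'' hypotheses of LC---$i \dord j$ among voters and $a \dord b$ among candidates---interact correctly, since LC mixes voters and candidates into a \emph{single} order whereas VI and CI each give orders on only one of the two sets. In particular I must choose the interleaving rule (where exactly to insert each candidate among the voters, or vice versa) so that the induced restriction to $N$ agrees with a VI-compatible voter order and the restriction to $C$ agrees with a CI-compatible candidate order, and then confirm that no edge case (empty approval sets, candidates approved by nobody, or ties in the interval endpoints) breaks contiguity. Once the interleaving is pinned down, the core implication in each case is a short interval-containment argument, so I expect essentially all the difficulty to be in the correct construction of $\dord$ rather than in the final verification.
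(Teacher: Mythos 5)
Your plan is essentially the paper's own proof: for the CI case the paper inserts each voter at the left endpoint $\min \top_i$ of her approval interval into the candidate order, for the VI case it inserts each candidate at its first approver $\first_c$ into the voter order (ties broken arbitrarily), and then carries out exactly the short interval-containment verification you sketch. The left-endpoint placement you suggest is indeed the correct choice (placing candidates at their \emph{last} approver would fail LC), so your construction goes through as in the paper.
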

\begin{proof}
The case of voter-interval preferences. Let $\dord$ be a linear order over $N$ that witnesses that preferences are voter-interval. Let us sort $N$ by this order. For each candidate $c$, by $\first_c$ we denote $\min \{i\in N\colon c\in \top_i\}$. Let us now associate each candidate $c$ to $\first_c$ (breaking the tie between $c$ and $\first_c$ arbitrarily). If two candidates $a, b$ are associated to the same point, we also break the tie between them arbitrarily. In such a way we obtained an order $\dord$ over $N \cup C$.  For simplicity, for each $x, y\in N\cup C$ by $x\dordeq y$ we denote "$x \dord y$ or $x=y$".

Consider two voters, $i$ and $j$, with $i \dord j$, and two candidates, $a$ and $b$, with $a \dord b$. Assume $i$ approves $b$ and $j$ approves $a$. We will prove that $i$ approves $a$. Since $a \dord b$, by our definition $\first_a \dordeq \first_b$. Since $i$ approves $b$, $\first_b \dordeq i$, and so $\first_a \dordeq i$. If $i = \first_a$, then $i$ approves $a$. Otherwise, $\first_a \dord i$. Consequently, $\first_a$, $i$, and $j$ are three voters, such that $\first_a \dord i \dord j$. Since the preferences are voter-interval we infer that $i$ approves $a$.

The case of candidate-interval preferences. Let $\dord$ be a linear order on $C$ witnessing the candidate-interval property. Let us sort $C$ by this order. We associate each voter $i\in N$ with $(\min \top_i)$, again breaking all the ties arbitrarily. Consider  two voters, $i$ and $j$ with $i \dord j$, and two candidates $a$ and $b$, with $a \dord b$. Further, assume that $i$ approves $b$ and $j$ approves $a$. Since $i \dord j$, we get that $(\min \top_i) \dordeq (\min \top_j)$, and since $j$ approves $a$, we have $(\min \top_j) \dordeq a$.  Consequently, $(\min \top_i) \dordeq a$. If $(\min \top_i) = a$, then $i$ approves $a$. Otherwise, $(\min \top_i)$, $a$ and $b$ are three candidates, such that $(\min \top_i) \dord a \dord b$. Given that preferences are candidate-interval, and that $i$ approves $b$, we get that $i$ approves $a$.
\end{proof}

Let us now compare the domain of linearly consistent preferences with the one of \emph{seemingly single-crossing (SSC)} preferences~\citep{elkind2017structured}---another known class that generalizes VI and CI domains. We say that preferences are seemingly single-crossing if there is a linear order over voters such that for each $a, b\in C$, the voters approving $a$ and not $b$ either all succeed or all precede the voters approving $b$ and not $a$.\footnote{There is also another class, generalising both VI and CI---namely, the class of \emph{possibly single-crossing} (PSC) preferences. This is the class of approval preferences that can be obtained from some strict single-crossing ones (assuming that every voter approves a consistent prefix of her ranking). Interestingly, PSC is equivalent  to the class of seemingly single-crossing preferences~\cite{elkind2017structured}.} Observe that LC implies SSC. The reverse implication does not hold, as we show in \Cref{ex:ssc-not-lc} below.

\begin{example}\label{ex:ssc-not-lc}
Consider the election instance with $3$ voters and the following preferences:
\begin{align*}
    v_1\colon \{a, c\}\\
    v_2\colon \{a, b\}\\
    v_3\colon \{b, c\}
\end{align*}
It is straightforward to check that these preferences are SSC for all pairs of candidates and any linear order over voters.

Suppose that this election instance is LC and let $\dord$ be the required linear order over $N \cup C$. Without the loss of generality, let $a \dord b$. Then we have that $v_1,v_2 \dord 3$ (otherwise, LC would be violated for voter $v_3$, a voter $j\in \{v_1, v_2\}$ such that $v_3 \dord j$, and candidates $a, b$). 

Further, suppose that $b \dord c$. Then, voters $v_1$ and $v_3$ together with candidates $b, c$ witness the violation of LC, a contradition. Hence, $c\dord b$. But then, voters $v_2$ and $v_3$ together with candidates $b, c$ witness the violation of LC. The obtained contradition completes the proof.
\end{example}


\section{Finding Core-Stable Committees for Restricted Domains}\label{sec:algorithm}

In this section, we describe an algorithm for finding committees that takes as input preferences represented as weak orders. We will show that if the preferences are approval linearly consistent (LC), or strict recursive single-top-crossing (r-STC), then the returned committee belongs to the core. Our algorithm works in polynomial time. As a corollary of this result, we get that the core is always nonempty for the following classic domain restrictions: 
\begin{inparaenum}[(1)]
\item voter-interval,
\item candidate-interval,
\item single-peaked, and
\item single-crossing preferences.
\end{inparaenum}

Hereinafter we assume that the fraction $\nicefrac{n}{k}$ is integral. It does not limit the scope of the work due to the following observation:
\begin{observation}\label{obs:fractions_integral}
Consider an election $E$ and the instance $E'$ obtained from $E$ by multiplying each voter $k$ times. If a committee $W$ is not in the core for $E$, then it is not in the core for $E'$.
\end{observation}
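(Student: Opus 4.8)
The plan is to prove the contrapositive direction that is actually useful, namely that if $W$ is in the core for the blown-up instance $E'$, then it is in the core for $E$; equivalently, any core violation in $E$ can be transported to a core violation in $E'$. First I would fix the core-violating pair for $E$: a coalition $S \subseteq N$ and a set $T$ with $|T| \le k \cdot \nicefrac{|S|}{n}$ such that every voter in $S$ strictly prefers $T$ to $W$ (using the extension $\extord_i$). The goal is to build a coalition $S'$ in $E'$ and reuse the same $T$ to witness that $W$ is not in the core for $E'$.

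The key observation is how the quota scales. In $E'$ each original voter is replaced by $k$ identical copies, so the number of voters is $n' = kn$ while the committee size $k$ is unchanged. I would take $S'$ to be the set of all $k$ copies of every voter in $S$, so that $|S'| = k \cdot |S|$. The crucial step is then the quota comparison: the deviation of $S'$ in $E'$ is permitted to target a set of size up to $k \cdot \nicefrac{|S'|}{n'} = k \cdot \nicefrac{(k|S|)}{(kn)} = k \cdot \nicefrac{|S|}{n}$, which is exactly the bound that $T$ already satisfies in $E$. Hence the same $T$ is an admissible deviation for $S'$ in $E'$.

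It remains to check that every voter in $S'$ strictly prefers $T$ to $W$. This is immediate because each copy of a voter $i \in S$ has exactly the same preference ranking $\succsim_i$ as $i$, so $T \extord_i W$ carries over verbatim to every copy. Therefore $(S', T)$ witnesses that $W$ violates the core in $E'$, completing the argument. I do not foresee a genuine obstacle here; the only point requiring a little care is the bookkeeping of the quota, making sure the factor of $k$ in $|S'| = k|S|$ cancels against the factor of $k$ in $n' = kn$ so that the admissible deviation size is preserved exactly rather than merely approximately. Everything else follows because duplicating a voter neither changes her preferences nor affects the relation $\extord_i$.
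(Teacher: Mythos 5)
Your proof is correct and matches the intended argument: the paper states this observation without proof, treating it as immediate, and your verification---duplicating the blocking coalition into all $k$ copies, checking the quota computation $k \cdot \nicefrac{|S'|}{n'} = k \cdot \nicefrac{k|S|}{kn} = k \cdot \nicefrac{|S|}{n}$, and noting that copies inherit the relation $\extord_i$ verbatim---is exactly the bookkeeping the paper leaves implicit.
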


The algorithm, which we call \textsc{CommitteeCore}, consists of two phases: first we construct a fractional committee and then we discretize it. The first phase (further called the \textsc{BestRepresentative} algorithm) is the following: imagine that each voter has an equal probability portion $\nicefrac{k}{n}$ to distribute, and that we want to choose one candidate (her \myemph{representative}) who gets this portion. Initially, the fractional committee $p$ is empty. We iterate over the set of voters, sorted according to the relation $\dord$. Let us denote by $P_i$ the set of unelected candidates at the moment of considering voter $i\in N$. The representative of $i$ is defined as a candidate $r_i\in P_i$ such that for each $c\in P_i$ it holds that either $r_i \succ_i c$ or that $r_i \sim_i c$ and $r_i \dord c$. Next, $p(r_i)$ is increased by $\nicefrac{k}{n}$. Note that, as $\nicefrac{n}{k}$ is integral, the election probability of each candidate does not exceed $1$.

In \Cref{sec:random-core} we prove that after this phase the obtained fractional committee $p$ is in the core for all strict elections and all LC approval elections. Denote by $W_1$ the set of candidates $c$ such that $p(c) = 1$. Before the second phase of the algorithm, remove candidates from $W_1$ from the election together with the voters who are represented by them, obtaining a smaller election $E_2$. By $k_2$ we denote $k-|W_1|$ (remaining seats in the committee) and by $n_2$ we denote $n-|W_1|\cdot \nicefrac{n}{k}$ (remaining voters). Renumerate the voters so that they are numbers from $[n_2]$ (and in case of r-STC elections, resort them so that $E_2$ is still r-STC). Note that by definition $\nicefrac{n_2}{k_2} = \nicefrac{(n-|W_1|\cdot \nicefrac{n}{k})}{(k-|W_1|)} = n\cdot \nicefrac{(1-\nicefrac{|W_1|}{k})}{(k-|W_1|)} = \nicefrac{n}{k}$.

The second phase (further called the \textsc{MedianRule} algorithm) is simple: for each $q\in [k_2]$ denote by $m_q$ the voter $(q-1)\cdot \nicefrac{n}{k}+1$. Further we will refer to these voters as \emph{median voters}. Then elect committee $W_2=\{r_{m_q}\colon q\in [k_2]\}$.

Finally, we return the committee $W = W_1 \cup W_2$. In \Cref{sec:discrete-core} we show that the final committee $W$ belongs to the core for LC and r-STC preferences.

\subsection{Core Stability for Fractional Committees}\label{sec:random-core}

In this subsection we prove that the committee elected by \textsc{BestRepresentative} is always in the core for LC approval elections and for all elections with strict preferences. The proof is the same for those two models; we will refer only to the following property:

\begin{definition}\label{def:property}
Given an election $E=(N, C, k)$, we say that $E$ is \myemph{well-ordered}, if there exists a linear order $\dord$ over $N \cup C$ such that for each voters $i, j \in N$ ($i \dord j$) and candidates $a, b\in C$ ($a \dord b$), if $a \sim_j b$ and $a, b\notin \bot_j$, then $a \succsim_i b$.
\end{definition}

It is clear that every strict election is well-ordered for every order $\dord$ (the premise is never satisfied). For approval elections this definition is a weakening of \Cref{def:lc} (because for approval elections $a, b\notin \bot_j \implies a\in \top_j$), hence every LC election is well-ordered.

For convenience, for $i \in N$ by $p_i$ we denote the fractional committee $p$ after considering voter $i$. Let $\sigma_i\in [m]$ be the number such that $r_i \in \pos_i(\sigma_i)$. From how the algorithm \textsc{BestRepresentative} works, we have that for every voter $i\in N$ and a candidate $c\in \pos_i([\sigma_i-1])$ it holds that $p_{i-1}(c) = 1$ (and also $p_j(c) = 1$ for every $j \geq i$).

Before proving that \textsc{BestRepresentative} returns committees belonging to the core, let us start from the following observation.

\begin{observation}\label{obs:randomized-quantified}
For each $i\in N$ and $c\in C$, there exists $q\in [\nicefrac{n}{k}]$ such that $p_i(c)=q\cdot \nicefrac{k}{n}$. In particular, $q$ is the number of voters for whom $c$ is a representative,
\end{observation}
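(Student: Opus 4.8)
The plan is to read the claim straight off the increment rule of \textsc{BestRepresentative} and prove it by induction on $i$. Write $p_0$ for the empty fractional committee (so $p_0(c)=0$ for all $c$), and for a candidate $c$ let $q_i(c)$ denote the number of voters among $\{1,\dots,i\}$ whose representative is $c$. I would establish the invariant that $p_i(c)=q_i(c)\cdot\nicefrac{k}{n}$ holds for every $i$ and every $c\in C$; this simultaneously shows that $p_i(c)$ is the stated multiple of $\nicefrac{k}{n}$ and that the multiplier equals the promised count, so it settles the ``in particular'' part as a by-product.

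The induction itself is routine. The base case is $p_0(c)=0=q_0(c)\cdot\nicefrac{k}{n}$. For the step, observe that processing voter $i$ touches $p$ in only one place: it raises $p(r_i)$ by exactly $\nicefrac{k}{n}$ and leaves every other value unchanged. Hence $q_i(r_i)=q_{i-1}(r_i)+1$ while $q_i(c)=q_{i-1}(c)$ for $c\neq r_i$, and the invariant propagates.

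The only part that needs a genuine argument is the range $q_i(c)\le\nicefrac{n}{k}$ (the bound $q_i(c)\ge 0$ is trivial, with equality precisely when $c$ has not yet been selected). A candidate may be chosen as $r_j$ only when it lies in the pool $P_j$ of \emph{unelected} candidates at step $j$, that is, only while $p(c)<1$. Each selection adds $\nicefrac{k}{n}$, so after $\nicefrac{n}{k}$ selections we reach $p(c)=\nicefrac{n}{k}\cdot\nicefrac{k}{n}=1$ exactly---here I use the standing assumption that $\nicefrac{n}{k}$ is integral---at which moment $c$ becomes elected and is removed from every subsequent pool $P_{j'}$. Thus $c$ can serve as a representative at most $\nicefrac{n}{k}$ times, giving $q_i(c)\le\nicefrac{n}{k}$ and finishing the claim.

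I do not expect any real obstacle: the statement is a bookkeeping consequence of the ``add $\nicefrac{k}{n}$ to one candidate'' update, and the only substantive ingredient is the cap $q\le\nicefrac{n}{k}$, which comes for free from dropping candidates once their mass hits $1$. The single subtlety worth flagging is the boundary value $q=0$, which occurs when $c$ has not been picked among the first $i$ voters; one should read $[\nicefrac{n}{k}]$ as admitting this case (or restrict attention to candidates selected at least once) to remove the only ambiguity.
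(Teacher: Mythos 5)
Your proof is correct and matches the paper exactly: the paper states this observation without any proof, treating it as immediate from the fact that \textsc{BestRepresentative} only ever adds $\nicefrac{k}{n}$ to the current voter's representative and removes a candidate from the pool once $p(c)=1$, and your induction is precisely the formalization of that bookkeeping, including the only substantive point (the cap $q \leq \nicefrac{n}{k}$ via integrality of $\nicefrac{n}{k}$). Your flagged boundary case $q=0$ is a legitimate quibble, since the paper defines $[t]=\{1,\ldots,t\}$, so the statement as written implicitly restricts to candidates that have been selected at least once.
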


\begin{theorem}\label{thm:randomized-Core}
Each fractional committee elected by \textsc{BestRepresentative} belongs to the core for well-ordered elections.
\end{theorem}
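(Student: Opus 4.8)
The plan is to assume, for the sake of contradiction, that the fractional committee $p$ returned by \textsc{BestRepresentative} is not in the core. Then there is a group $S \subseteq N$ and a fractional subset $p'$ with $p'(C) \leq k \cdot \nicefrac{|S|}{n}$ such that every voter $i \in S$ strictly prefers $p'$ to $p$, i.e.\ $p' \extord_i p$. The first step is to translate this preference of each voter into concrete information about the representatives assigned by the algorithm. For each $i \in S$, since $p' \extord_i p$, there is a threshold position $\sigma \in [d_i]$ where $p'$ first strictly beats $p$ in the cumulative mass placed on the top-$\sigma$ equivalence classes of voter $i$; in particular $p'(\pos_i([\sigma])) > p(\pos_i([\sigma]))$ for the relevant $\sigma$. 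The key structural fact to exploit is the one recorded just before the statement: for each voter $i$ with representative $r_i \in \pos_i(\sigma_i)$, every candidate strictly above $r_i$ in $i$'s ranking is fully elected, $p_{i-1}(c)=1$, and hence $p(c)=1$ in the final $p$. Thus the only way $p'$ can improve on $p$ for voter $i$ is by assigning more mass than $p$ does somewhere at or below position $\sigma_i$, which forces $p'$ to devote mass to $i$'s top classes that $p$ has already (partially) saturated.

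Second, I would set up a counting/charging argument that sums these per-voter gains across $S$ and derives a contradiction with the budget constraint $p'(C) \leq k \cdot \nicefrac{|S|}{n}$. The natural approach is to pick an appropriate common threshold candidate and measure, for each $i \in S$, how much mass $p'$ must place on the candidates that $i$ ranks weakly above her representative (or above the improvement point). The intuition is that \textsc{BestRepresentative} gives each voter exactly a $\nicefrac{k}{n}$ share of the best still-available candidate, processing voters in the order $\dord$; so if a coalition of size $|S|$ could be made strictly better off with total mass $k\cdot\nicefrac{|S|}{n}$, there would have to be ``room'' above the representatives that the greedy process failed to fill, contradicting greediness. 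Here is where the \emph{well-ordered} hypothesis (Definition~\ref{def:property}) does the essential work: it guarantees that tie-classes are consistent along the order $\dord$, so that when a later voter $j$ (with $i \dord j$) ties two candidates $a \dord b$ not in her bottom class, the earlier voter $i$ weakly prefers $a$ to $b$. This consistency is exactly what is needed to argue that the candidates the coalition would use to improve on $p$ are precisely the candidates that earlier voters in $S$ have already claimed as representatives, so the total budget the coalition spends on distinct improving candidates is already accounted for by $p$ itself.

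Concretely, the crux is to define, for the smallest (in $\dord$) voter $i^\star \in S$ whose improvement threshold is ``tight,'' the set $A$ of candidates at or above $r_{i^\star}$ in $i^\star$'s preference, and to show two things: (i) by greediness and Observation~\ref{obs:randomized-quantified}, $p$ already places mass exactly $\sigma_{i^\star}\cdot\nicefrac{k}{n}$ or a similarly quantified amount on $A$ for the voters of $S$ that precede the point of improvement; and (ii) by well-orderedness, every voter in $S$ who strictly improves must draw her extra mass from candidates that lie, according to $\dord$, among those already selected, so no genuinely ``new'' mass can be injected without exceeding the budget. Summing the per-voter deficits and comparing to $k\cdot\nicefrac{|S|}{n}$ then yields the contradiction.

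I expect the main obstacle to be step two, the aggregation: turning the individual lexicographic improvements $p'\extord_i p$ into a single global inequality that contradicts the budget. The lexicographic extension makes each voter's ``improvement position'' $\sigma$ potentially different, so one cannot naively sum at a single fixed position; the argument likely needs to choose, among all voters of $S$, the one that is extreme with respect to $\dord$ and whose threshold candidate is ranked worst, and then use well-orderedness to propagate a uniform lower bound on the mass $p'$ must spend. Managing the boundary interaction between the lexicographic tie-breaking inside the algorithm ($r_i \sim_i c$ resolved by $r_i \dord c$) and the tie-classes in the well-ordered condition—ensuring these conventions align rather than conflict—will be the delicate technical point.
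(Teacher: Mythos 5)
Your proposal correctly assembles the raw ingredients---the saturation of positions above $\sigma_i$ (so each voter's lexicographic improvement can only occur exactly at the position of her representative), the quantization from \Cref{obs:randomized-quantified}, and the role of well-orderedness (\Cref{def:property}) in making tie-classes consistent along $\dord$---but the step you yourself flag as the main obstacle is in fact the entire proof, and your plan for it does not go through as stated. Your claim (ii), that every improving voter ``must draw her extra mass from candidates already claimed as representatives, so no genuinely new mass can be injected,'' is unsubstantiated: the blocking committee $p'$ may place its excess mass on candidates in $\pos_i(\sigma_i)$ that no one represents and that $p$ never touched. Well-orderedness does not directly forbid this; what it permits is an \emph{exchange}: mass sitting on a candidate $a$ tied with $r_i$ and with $r_i \dord a$ can be rerouted onto $r_i$ without making any voter preceding $i$ worse off. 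Without that rerouting mechanism, and without a way to handle the fact that the improvement positions $\sigma_i$ are heterogeneous across $S$, your ``sum the per-voter deficits against the budget $k\cdot\nicefrac{|S|}{n}$'' plan has no concrete inequality to sum, and your fallback (pick an extreme voter $i^\star$ and propagate uniform bounds) is left as a hope rather than an argument.

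The paper's proof avoids global aggregation entirely by induction over the processing order, with the invariant that the partial committee $p_i$ satisfies the core condition of \Cref{def:fractional-core} restricted to coalitions $S \subseteq [i]$. The inductive step uses two local moves. First, a blocking coalition for $p_{i+1}$ must contain voter $i+1$ (probabilities never decrease, so otherwise it would already block $p_i$), and voter $(i+1)$'s improvement is pinned to position $\sigma_{i+1}$ with $r_{i+1}\notin\bot_{i+1}$. Second, if $p'(r_{i+1})$ is too small, the exchange described above moves a $\nicefrac{k}{n}$ portion within $\pos_{i+1}(\sigma_{i+1})$ onto $r_{i+1}$; this is exactly where well-orderedness is used, and it is sound precisely because every other coalition member precedes $i+1$ in $\dord$. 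Then one subtracts $\nicefrac{k}{n}$ from $r_{i+1}$ in \emph{both} $p_{i+1}$ and $p'$, observing that all pairwise comparisons for voters in $S\setminus\{i+1\}$ are preserved and the budget drops to $k\cdot\nicefrac{(|S|-1)}{n}$, yielding a blocking coalition for $p_i$ and contradicting the inductive hypothesis. This peel-off structure---removing the $\dord$-last coalition member together with one $\nicefrac{k}{n}$ quantum of her representative---is the missing idea in your proposal; it converts the heterogeneous-thresholds problem you correctly identified into a sequence of single-voter steps, each requiring only the local exchange, rather than one global charging inequality.
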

\begin{proof}
We will prove the following invariant: for each $i \in N$, $p_i$ satisfies the condition of the fractional core (see \Cref{def:fractional-core}) with the additional restriction that $S\subseteq [i]$. We will prove the invariant by induction.

For the first voter the invariant is clearly true. Assume, there exists $i\in N$ satisfying the invariant. We will prove that the invariant holds also for voter $(i+1)$.

For the sake of contradiction suppose that there exists a group $S \subseteq [i+1]$ and a fractional committee $p_{i+1}'$ such that for each $v \in S$ we have that $v$ prefers lexicographically $p_{i+1}'$ to $p_{i+1}$. 

First, note that if $(i+1) \notin S$, then the invariant does not hold also for $i$, a contradiction. This is the case becasue the election probability of no candidate is decreased during a loop iteration. Hence, $(i+1)\in S$. 

By the definition of \textsc{BestRepresentative} we have that for each $\varrho < \sigma_{i+1}$ and $c\in \pos_{i+1}(\varrho)$ it holds that $p_{i+1}(c) = 1$. From that, in particular we have the following equation:
\begin{equation*}
    \forall \varrho < \sigma_{i+1}.\ p_{i+1}'(\pos_{i+1}(\varrho)) \leq |\pos_{i+1}(\varrho)| = p_{i+1}(\pos_{i+1}(\varrho))
\end{equation*}
Hence, as $(i+1)$ prefers lexicographically $p'$ to $p$:
\begin{equation}\label{eq:equal_at_pos}
    \forall \varrho < \sigma_{i+1}.\ p_{i+1}'(\pos_{i+1}(\varrho)) = p_{i+1}(\pos_{i+1}(\varrho))
\end{equation}
It also needs to hold that:
\begin{equation}\label{eq:geq_at_pos}
    p_{i+1}'(\pos_{i+1}([\sigma_{i+1}])) > p_{i+1}(\pos_{i+1}([\sigma_{i+1}]))
\end{equation}

We can conclude that $\sigma_{i+1} < d_{i+1}$, as otherwise voter $(i+1)$ could not prefer  $p'_{i+1}$ over $p_{i+1}$. Consequently:
\begin{equation}\label{eq:r_not_the_lowest}
    r_{i+1} \notin \bot_{i+1}
\end{equation}

Suppose that $p_{i+1}'(r_{i+1}) = 0$. From \eqref{eq:geq_at_pos} and the fact that for all $c\in \pos_{i+1}(\sigma_{i+1})$ with $c\dord r_{i+1}$ we have $p(c)=1$, we infer that there exists $a\in \pos_{i+1}(\sigma_{i+1})$ such that $r_{i+1}\dord a$ and $p_{i+1}'(a) > 0$. From \Cref{obs:randomized-quantified} we have that $p_{i+1}'(a) \geq \nicefrac{k}{n}$. Now we modify $p_{i+1}'$ by moving the fraction of $\nicefrac{k}{n}$ from $a$ to $r_{i+1}$. By \Cref{def:property} and \eqref{eq:r_not_the_lowest} we have that for every $v\in S$ (naturally, $v\dord (i+1)$) it holds that $r_{i+1} \succsim_v a$. Thus, after the change $p_{i+1}'$ still witnesses core violation for~$S$.

Now consider a fractional committee $p_i'$ obtained from $p_{i+1}'$ by decreasing the probability portion of $r_{i+1}$ by $\nicefrac{k}{n}$. We will show that $p_i'$ together with $S\setminus \{(i+1)\}$ witness the core violation for $p_i$. Indeed, the election probability of no candidate except $r_{i+1}$ changed, and the election probability of $r_{i+1}$ changed in the same way: in $p_{i+1}$ and $p_{i+1}'$ it is higher by $\nicefrac{k}{n}$ than in $p_i$ and $p_i'$, respectively. Hence, if for a voter $v\in S$ it holds that $p_{i+1}' \vartriangleright_v p_{i+1}$, then also $p_i' \vartriangleright_v p_i$. Besides, we have that $p_i'(C) \leq k \cdot \nicefrac{|S-1|}{n}$, so we obtain a contradiction with our inductive assumption.
\end{proof}

\subsection{Core Stability for Discrete Committees}\label{sec:discrete-core}

We will now prove our main results: that the committee $W$ elected by \textsc{CommitteeCore} is in the core for approval LC preferences and for strict r-STC preferences. The algorithm for these two restricted domains is the same, but the proof techniques used for these models differ significantly.


\subsubsection{Core Stability for Approval LC Elections}\label{sec:algorithm-app}

Let us start with the following observation.

\begin{observation}\label{obs:order_preserved}
Consider an approval LC election $E$ and two voters $i, j$ who were not removed from the election after the first phase, such that $i \dord j$. Then either $r_i = r_j$ or $r_i \dord r_j$. 
\end{observation}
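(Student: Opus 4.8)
The plan is to argue by contradiction: assume $r_j \dord r_i$ with $r_i \neq r_j$ and derive a contradiction with the hypotheses. The first fact I would record is a monotonicity of availability. During \textsc{BestRepresentative} the probability of a candidate never decreases, so once a candidate is elected it stays elected; since $r_j \in P_j$ is unelected at the moment the later voter $j$ is processed, it is a fortiori unelected when the earlier voter $i$ is processed, i.e.\ $r_j \in P_i$. Hence $r_j$ was among the candidates available to $i$ at the time $i$ chose her representative.

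Next I would exploit the greedy selection rule together with its tie-breaking. Voter $i$ picks $r_i$ as a $\succsim_i$-maximal available candidate, resolving ties towards the $\dord$-smaller one. Since $r_j$ was available and $r_j \dord r_i$, the only way $i$ could still have selected $r_i$ is that $i$ strictly prefers $r_i$ to $r_j$ (otherwise the tie would have been broken in favour of the $\dord$-earlier $r_j$). In the approval model $r_i \succ_i r_j$ means exactly that $i$ approves $r_i$ but not $r_j$, that is $r_i \in \top_i$ and $r_j \notin \top_i$.

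I would then bring in linear consistency. Applying the LC condition to the voters $i \dord j$ and the candidates $r_j \dord r_i$: if $j$ approved $r_j$, i.e.\ $r_j \in \top_j$, then since $r_j \dord r_i$, LC would give $r_j \succsim_i r_i$, contradicting $r_i \succ_i r_j$ from the previous step. Hence $r_j \notin \top_j$, so $j$ does not approve her own representative. In the approval model this can only happen if, by the time $j$ is processed, every candidate approved by $j$ has already been elected (has probability $1$), so that $r_j$ is merely the $\dord$-smallest available candidate.

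The crux, and the step I expect to be the main obstacle, is to close the argument using the assumption that $j$ was not removed. Because $j$ is not removed, $r_j$ never reaches probability $1$, yet at time $j$ the candidate $r_j$ is the $\dord$-smallest available candidate while all of $\top_j$ is already fully elected. I would try to reach a contradiction by tracking how the prefix of candidates $\dordeq r_j$ fills up as we sweep through the voters $\dordeq j$ and by using LC to transfer the approvals of $j$ backwards onto the earlier voters who actually elected those candidates, thereby exhibiting an approved, still-available candidate for $j$ that would have been chosen instead of the disapproved $r_j$. This interaction between a non-removed voter whose representative is disapproved and the candidates already saturated before her is the delicate part: the complementary ``$r_j \in \top_j$'' case is dispatched immediately by the LC step above, so essentially all the work lies in ruling out that a non-removed voter with a disapproved representative can be positioned $\dord$-after a voter with a $\dord$-later representative.
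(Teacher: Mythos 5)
Your first three steps are correct, and together they already constitute the paper's \emph{entire} proof: the paper argues in one line that if $r_j \dord r_i$, then LC gives that $i$ approves $r_j$, so $r_j$ (still unelected when $i$ is processed, by exactly your availability argument) should have been chosen as $i$'s representative. One small simplification compared to your route: you do not need the detour through ``$i$ approves $r_i$''. The formal LC condition applied with $a=r_j$, $b=r_i$ yields $r_j \succsim_i r_i$ from $r_j\in\top_j$ alone, and this together with $r_j\dord r_i$ and $r_j\in P_i$ directly contradicts the selection rule of \textsc{BestRepresentative}. Either way, the argument needs the premise $r_j\in\top_j$, i.e.\ that the later voter approves her own representative; the paper assumes this silently, while you correctly isolate $r_j\notin\top_j$ as the unhandled case.

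That residual case is a genuine gap in your write-up, but your plan to close it (``transferring approvals backwards'' to exhibit an approved available candidate for $j$) cannot succeed, because the statement as literally formulated fails there. Take $n=6$, $k=3$ (each voter distributes $\nicefrac{1}{2}$), candidates ordered $x\dord b\dord c\dord d$, and voters $v_1\dord v_2\dord\cdots\dord v_6$ with $\top_{v_1}=\{b\}$, $\top_{v_2}=\top_{v_3}=\top_{v_4}=\{c\}$, $\top_{v_5}=\top_{v_6}=\{d\}$. One checks directly that these orders witness LC: the candidate $x$ is approved by nobody, the unique $b$-approver precedes all $c$-approvers, which precede all $d$-approvers, and every remaining instance of the condition in \Cref{def:lc} holds vacuously or because both candidates are disapproved. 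Running \textsc{BestRepresentative}: $r_{v_1}=b$; voters $v_2,v_3$ elect $c$; when $v_4$ is processed, her only approved candidate $c$ is already elected, so all of $P_{v_4}\supseteq\{x,b,d\}$ are tied in $\bot_{v_4}$ and the tie-breaking gives $r_{v_4}=x$; then $v_5,v_6$ elect $d$. Hence $W_1=\{c,d\}$, voters $v_1$ and $v_4$ are not removed, $v_1\dord v_4$, and yet $r_{v_4}=x\dord b=r_{v_1}$, contradicting the observation. So the disapproved-representative case must be excluded by an additional hypothesis (e.g.\ requiring $r_j\in\top_j$, which is precisely the case you and the paper both dispatch, or a canonical choice of the LC order under which it can be avoided) rather than proved away; your reduction pinpoints exactly where the paper's one-line argument is incomplete.
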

\begin{proof}
Towards a contradiction assume that $r_j \dord r_i$. From LC we have that $i$ approves $r_j$ and $r_j$ should be $i$'s representative.
\end{proof}

Second, we prove that algorithm \textsc{CommitteeCore} elects exactly $k$ candidates.

\begin{lemma}
\textsc{CommitteeCore} for an approval LC election $E$ elects exactly $k$ candidates. 
\end{lemma}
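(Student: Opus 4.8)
The plan is to decompose the returned committee as $W = W_1 \cup W_2$ and prove two things: that $W_1$ and $W_2$ are disjoint, and that $W_2$ contains exactly $k_2 = k - |W_1|$ distinct candidates. Together these give $|W| = |W_1| + k_2 = k$. First I would check that the median voters $m_1, \ldots, m_{k_2}$ are genuine voters of $E_2$: since $\nicefrac{n_2}{k_2} = \nicefrac{n}{k}$, the largest index is $m_{k_2} = (k_2 - 1)\cdot \nicefrac{n}{k} + 1 = n_2 - \nicefrac{n}{k} + 1 \le n_2$, so every $m_q$ is a valid index in $[n_2]$. Because $E_2$ is obtained by deleting precisely those voters whose first-phase representative lies in $W_1$, every voter surviving into $E_2$ has its representative outside $W_1$; in particular $r_{m_q} \notin W_1$ for all $q$, which yields $W_2 \cap W_1 = \emptyset$.

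The heart of the argument is to show that $r_{m_1}, \ldots, r_{m_{k_2}}$ are pairwise distinct, and for this I would combine the two preceding observations. By \Cref{obs:order_preserved}, the representative map is $\dord$-monotone on the surviving voters: for $i \dord j$ in $E_2$ we have $r_i = r_j$ or $r_i \dord r_j$. Consequently, if two surviving voters share a representative $c$, then every voter $\dord$-between them also has representative $c$ (monotonicity forces $c \dordeq r_\ell \dordeq c$), so the voters of $E_2$ represented by $c$ form a contiguous $\dord$-block. By \Cref{obs:randomized-quantified}, a candidate $c$ is the representative of exactly $q$ voters where $p(c) = q\cdot \nicefrac{k}{n}$; since $c \notin W_1$ means $p(c) < 1$, we get $q \le \nicefrac{n}{k} - 1$. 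Moreover this count is unchanged by passing to $E_2$, because the only voters removed are those represented by candidates of $W_1$, none of whom point at $c$; hence $c$ represents at most $\nicefrac{n}{k} - 1$ voters of $E_2$ as well.

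Putting these facts together, consecutive median voters $m_q$ and $m_{q+1}$ are exactly $\nicefrac{n}{k}$ apart in the renumbered $\dord$-order of $E_2$, so the block from $m_q$ to $m_{q+1}$ contains $\nicefrac{n}{k} + 1$ voters. If $r_{m_q} = r_{m_{q+1}} = c$, monotonicity would force all $\nicefrac{n}{k} + 1$ of these voters to be represented by $c$, contradicting the bound $\nicefrac{n}{k} - 1$. Hence $r_{m_q} \ne r_{m_{q+1}}$, and monotonicity then upgrades this to a strict chain $r_{m_1} \dord r_{m_2} \dord \cdots \dord r_{m_{k_2}}$, so all $k_2$ representatives are distinct. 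This gives $|W_2| = k_2$, and with disjointness from $W_1$ we conclude $|W| = |W_1| + k_2 = k$. The step I expect to be the main obstacle is the counting: carefully justifying that the per-candidate representation count is preserved in $E_2$ for candidates outside $W_1$, and translating the resulting $\nicefrac{n}{k} - 1$ bound, via monotonicity, into the impossibility of two median voters sharing a representative; the remaining bookkeeping is routine.
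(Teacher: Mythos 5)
Your proof is correct and takes essentially the same route as the paper's: in both, the key step is that two median voters sharing a representative would force every voter $\dord$-between them to share it as well (you derive this from \Cref{obs:order_preserved}, the paper via a direct LC case analysis), yielding a candidate representing at least $\nicefrac{n}{k}$ voters, which contradicts the fact that \textsc{BestRepresentative} would already have fully elected and removed such a candidate. Your extra bookkeeping---validity of the median indices, $W_1 \cap W_2 = \emptyset$, and preservation of the per-candidate representation count in $E_2$---merely makes explicit what the paper leaves implicit.
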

\begin{proof}
We will show that \textsc{MedianRule} elects exactly $k_2$ candidates.
Suppose for the sake of contradiction that there are two median voters $i, j$ in $E_2$ such that $r_i = r_j$. Without loss of generality assume $i \dord j$. Consider now any voter $v$ between these median voters. If $r_v \dord r_i$ then from the definition of LC, $i$ approves $r_v$, and so $r_v$ should be selected as $i$'s representative, a contradiction. If $r_j = r_i \dord r_v$, then from the definition of LC, $v$ approves $r_j$, and so $r_j$ should be selected as $v$'s representative, a contradiction. Hence, $r_v = r_i$. But then we have that after running \textsc{BestRepresentative}, $r_i$ was a representative for at least $\nicefrac{n}{k}$ voters and was not elected, a contradiction.
\end{proof}

Note that every LC election remains LC for the same order $\dord$ after removing any number of voters and candidates. 


Finally, we are ready to prove the main technical lemma together with the main result.

\begin{lemma}\label{lem:not-much-decrease}
For each voter $i \in N$ it holds that $|W \cap \top_i| + 1 > p(\top_i)$.
\end{lemma}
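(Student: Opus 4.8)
The plan is to bound, for each voter $i$, the quantity $p(\top_i)$ by the number of candidates in $W$ that $i$ approves, plus a slack of strictly less than one. Recall that $p(\top_i) = \sum_{c \in \top_i} p(c)$, where each $p(c)$ is an integral multiple of $\nicefrac{k}{n}$ by \Cref{obs:randomized-quantified}, and equals the number of voters whom $c$ represents times $\nicefrac{k}{n}$. So $p(\top_i)$ counts, scaled by $\nicefrac{k}{n}$, the total number of voters in the whole profile whose representative $r_v$ is approved by $i$. The key structural fact I would exploit is \Cref{obs:order_preserved}: the map $v \mapsto r_v$ is monotone with respect to $\dord$, so the set of voters whose representative lies among $i$'s approved candidates forms a contiguous block in the $\dord$-ordering of the voters.

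First I would separate the contribution to $p(\top_i)$ coming from candidates already fully elected in the first phase (those in $W_1$, each with $p(c)=1$) from the contribution of candidates that survive into $E_2$. For a candidate $c \in W_1 \cap \top_i$ we have $p(c)=1$, and this $c$ is also counted in $|W \cap \top_i|$, so these terms contribute equally to both sides and cause no difficulty. The real work is to show that the mass $p$ places on approved candidates that are \emph{not} fully elected contributes at most $|W_2 \cap \top_i|$ plus a term strictly below one. Here I would use that \textsc{MedianRule} places the median voters at positions $(q-1)\cdot\nicefrac{n}{k}+1$, spacing them exactly $\nicefrac{n}{k}$ apart in $E_2$, and that each surviving approved candidate represents at most $\nicefrac{n}{k}-1$ voters in $E_2$ (a candidate representing $\nicefrac{n}{k}$ voters would have been elected and moved to $W_1$).

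The heart of the argument is a counting/pigeonhole step. By monotonicity of $v \mapsto r_v$, the voters in $E_2$ whose representative is approved by $i$ occupy a contiguous interval $I$ in the reindexed $[n_2]$. The fractional mass these candidates contribute to $p(\top_i)$ is $|I| \cdot \nicefrac{k}{n}$. An interval $I$ of length $|I|$ contains at least $\lceil |I| \cdot \nicefrac{k}{n}\rceil - \text{(boundary correction)}$ median voters, because the median voters are $\nicefrac{n}{k}$-spaced; more precisely an interval of length $\ell$ of the integers contains at least $\lceil (\ell - (\nicefrac{n}{k}-1))\cdot\nicefrac{k}{n}\rceil$ of an arithmetic progression with gap $\nicefrac{n}{k}$. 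Each median voter lying in $I$ contributes a distinct representative to $W_2 \cap \top_i$ (distinctness follows from the lemma showing \textsc{MedianRule} elects $k_2$ distinct candidates, so consecutive median voters have distinct representatives). Combining, $|W_2 \cap \top_i| \geq p_{E_2}(\top_i) - (1 - \nicefrac{k}{n})$, which gives the desired strict bound once reassembled with the $W_1$ contribution.

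The main obstacle I anticipate is handling the boundary of the interval $I$ cleanly: an approved candidate at the far end of $I$ may be represented by fewer than $\nicefrac{n}{k}$ voters and may or may not catch a median voter, and it is exactly this slack that must be shown to be strictly less than one rather than possibly equal to one. The slack is at most $(\nicefrac{n}{k}-1)\cdot\nicefrac{k}{n} = 1 - \nicefrac{k}{n} < 1$, so the inequality is strict, but making the off-by-one bookkeeping rigorous — correctly accounting for whether the first median voter's representative falls inside or outside the block of $i$-approved representatives, and ensuring each such representative is counted exactly once — is where care is required. I expect the cleanest route is to argue directly about the representatives of the median voters that immediately precede and follow the block $I$, rather than about $I$ itself, thereby sidestepping the fractional rounding entirely.
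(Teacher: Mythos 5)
Your skeleton matches the paper's proof closely: you split $p(\top_i)$ into the $W_1$-mass (which cancels against $|W_1 \cap \top_i|$ since $W = W_1 \sqcup W_2$) and the partial mass $q \cdot \nicefrac{k}{n}$ from \Cref{obs:randomized-quantified}, you argue the $q$ relevant voters form a contiguous block, and you charge median voters inside the block to pairwise distinct members of $W_2 \cap \top_i$, with the $\nicefrac{n}{k}$-spacing leaving slack at most $1 - \nicefrac{k}{n} < 1$. Your pigeonhole arithmetic is correct and equivalent to the paper's bound $q \leq (|W_2 \cap \top_i| + 1)\cdot \nicefrac{n}{k} - 1$; counting medians inside the interval from below, as you do, rather than bounding the interval length from above, as the paper does, is an immaterial difference.

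However, there is a genuine gap at the one step that constitutes the technical heart of the lemma: contiguity of the block $I$ does \emph{not} follow from the monotonicity of $v \mapsto r_v$ (\Cref{obs:order_preserved}), as you assert twice. Monotonicity only says the representatives form a $\dord$-nondecreasing sequence; for $\{v \colon r_v \in \top_i\}$ to be contiguous you additionally need that no candidate outside $\top_i$ serves as a representative $\dord$-between two representatives that lie in $\top_i$ --- and under LC the set $\top_i$ need not be a $\dord$-interval of candidates (that would be the CI property, strictly stronger than LC). Concretely, monotonicity alone does not exclude $a \dord b \dord c$ with $\top_i = \{a, c\}$, $b \notin \top_i$, and voters $v_1 \dord v_2 \dord v_3$ with $r_{v_1} = a$, $r_{v_2} = b$, $r_{v_3} = c$, which breaks your interval $I$. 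The paper closes exactly this hole by a case analysis on the position of the voter $i$ itself (note that $i$ is arbitrary, need not be among the $v_j$, and may even have been removed after the first phase): if $v_2 \dord i$, then LC applied to the voters $v_2, i$ and candidates $r_{v_1} \dord r_{v_2}$ forces $v_2$ to approve $r_{v_1}$, and since $r_{v_1} \in \top_i \setminus W_1$ is never fully elected and $\dord$-precedes $r_{v_2}$, this contradicts the choice of $r_{v_2}$ as $v_2$'s representative; if $i \dord v_2$, then LC applied to the voters $i, v_2$ and candidates $r_{v_2} \dord r_{v_3}$ yields $r_{v_2} \in \top_i$ directly. These two fresh applications of \Cref{def:lc}, beyond \Cref{obs:order_preserved}, are missing from your proposal; without them the counting step has nothing to stand on, while with them your argument goes through essentially as the paper's does.
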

\begin{proof}
Consider a voter $i\in N$. Define $\part_i$ as $p(\top_i)-|W_1 \cap \top_i|$. As $W_1$ contains all candidates $c$ such that $p(c) = 1$, then $\part_i$ is intuitively the joint sum of election probabilities of partially elected candidates in $\top_i$. From \Cref{obs:randomized-quantified} we have that:
\begin{equation}\label{eq:part_i}
    \part_i=q\cdot \nicefrac{k}{n}
\end{equation}
where $q$ is the number of voters for whom a candidate from $\top_i \setminus W_1$ is a representative. Naturally, such voters could not be removed from the election after the execution of \textsc{BestRepresentative}.

We will prove that $\part_i < |W_2\cap \top_i|+1$. From the fact that $W=W_1 \cup W_2$ and $W_1\cap W_2 = \emptyset$, it will imply the desired statement. We will now focus on upper-bounding $q$ from~\eqref{eq:part_i}.

Consider three voters $v_1, v_2, v_3$ such that $v_1 \dord v_2 \dord v_3$ and $r_{v_1}, r_{v_3} \in \top_i$. We will prove that then also $r_{v_2}\in \top_i$. Indeed, from \Cref{obs:order_preserved} we have that either $r_{v_2} \in \{r_{v_1}, r_{v_3}\}$ (and the statement is true) or $r_{v_1} \dord r_{v_2} \dord r_{v_3}$. 
First, consider the case, when $v_2 \dord i$. Since $i$ approves $r_{v_1}$ by LC applied to voters $v_2$, $i$ and candidates $r_{v_1}$ and $r_{v_2}$, we get that also $v_2$ approves $r_{v_1}$, a contradiction with \Cref{obs:order_preserved}. Second, we look at the case when $i \dord v_2$. From LC applied to $v_2$, $i$ and candidates $r_{v_2}$ and $r_{v_3}$ and by the fact that $i$ approves $r_{v_3}$ we get that $i$ also approves $r_{v_2}$, which is what we wanted to prove.


Hence, these $q$ voters from \eqref{eq:part_i} need to form a consistent interval among all non-removed voters. Besides, we know that there is no more than $|W_2 \cap \top_i|$ median voters inside this interval and that between each two median voters there is $\nicefrac{n}{k}-1$ non-removed voters. Hence:
\begin{equation*}
    q \leq (|W_2 \cap \top_i| + 1) \cdot (\nicefrac{n}{k} - 1) + |W_2 \cap \top_i| = (|W_2 \cap \top_i| + 1) \cdot \nicefrac{n}{k} - 1
\end{equation*}
and:
\begin{equation*}
    \part_i = q\cdot \nicefrac{k}{n} < (|W_2 \cap \top_i| + 1) \cdot \nicefrac{n}{k} \cdot \nicefrac{k}{n} = |W_2 \cap \top_i| + 1
\end{equation*}
which completes the proof.
\end{proof}

\begin{theorem}\label{thm:discrete-Core}
For approval LC elections, \textsc{CommitteeCore} elects committees from the core.
\end{theorem}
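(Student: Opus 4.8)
The plan is to prove the statement by contradiction, reducing it in a few lines to the two results already in hand: \Cref{thm:randomized-Core} (the fractional committee $p$ produced by \textsc{BestRepresentative} is in the fractional core of every well-ordered, and in particular every approval LC, election) and \Cref{lem:not-much-decrease} (for every voter $i$ we have $p(\top_i) < |W \cap \top_i| + 1$). Suppose towards a contradiction that $W = W_1 \cup W_2$ is not in the core. Then there are a group $S \subseteq N$ and a set $T \subseteq C$ with $|T| \le k \cdot \nicefrac{|S|}{n}$ such that every voter in $S$ strictly prefers $T$ to $W$. Recall that in the approval model (where $|W| = k \ge |T|$) a voter weakly prefers $W$ to $T$ exactly when $|W \cap \top_i| \ge |T \cap \top_i|$; hence the assumed violation says $|T \cap \top_i| > |W \cap \top_i|$ for every $i \in S$.

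Next I would lift $T$ to the fractional committee $p'$ given by its indicator, $p'(c) = 1$ for $c \in T$ and $p'(c) = 0$ otherwise, so that $p'(C) = |T| \le k \cdot \nicefrac{|S|}{n}$. Since $p$ lies in the fractional core by \Cref{thm:randomized-Core}, there must exist a voter $i \in S$ who weakly prefers $p$ to $p'$; in the approval model this weak preference is precisely the inequality $p(\top_i) \ge p'(\top_i) = |T \cap \top_i|$ (the lexicographic extension compares the $\top_i$ position first, and when those masses tie the subsequent $\bot_i$ comparison cannot favour $p'$ because $|T| \le |W|$).

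The final step combines this lower bound with \Cref{lem:not-much-decrease} applied to the same voter $i$:
\begin{equation*}
|T \cap \top_i| \;\le\; p(\top_i) \;<\; |W \cap \top_i| + 1 \text{.}
\end{equation*}
Because $|T \cap \top_i|$ and $|W \cap \top_i|$ are integers, this forces $|T \cap \top_i| \le |W \cap \top_i|$, i.e.\ voter $i$ weakly prefers $W$ to $T$, contradicting the choice of the violating pair $(S, T)$. Hence no violation exists and $W$ is in the core.

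I expect essentially all of the difficulty to reside in the two lemmas already proved rather than in this theorem, which is mostly a matter of gluing them together. The only points requiring care are the reduction of the approval preference extension to counting approved candidates (both for the discrete comparison and for the fractional weak preference delivered by stability of $p$) and the integrality rounding in the last inequality. Once those are pinned down, the two bounds---a lower bound $p(\top_i) \ge |T \cap \top_i|$ coming from the fractional core and an upper bound $p(\top_i) < |W \cap \top_i| + 1$ expressing that the discretization loses at most one unit of mass on each voter's approved set---sandwich $p(\top_i)$ and close the argument.
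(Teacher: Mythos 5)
Your proposal is correct and follows essentially the same route as the paper's own proof: assume a violating pair $(S,T)$, reduce the lexicographic comparison to counting approved candidates, lift $T$ to its indicator fractional committee $p'$, and sandwich $p(\top_i)$ between $|T\cap\top_i|$ (via \Cref{thm:randomized-Core}) and $|W\cap\top_i|+1$ (via \Cref{lem:not-much-decrease}). The only cosmetic difference is the order of the final step---the paper shows $(S,p')$ directly witnesses a fractional core violation and contradicts \Cref{thm:randomized-Core}, whereas you first extract the weakly-preferring voter from that theorem and then contradict \Cref{lem:not-much-decrease}---which is logically the same argument.
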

\begin{proof}
We know that fractional committee $p$ elected by \textsc{BestRepresentative} belongs to the core.
Suppose now that $W$ is not in the core. Hence, there exists a nonempty set $S\subseteq N$ and  a committee $T$ of size $|S|\cdot \nicefrac{k}{n}$ such that $|W \cap \top_i| < |T \cap \top_i|$ for each $i \in S$---alternatively, $|W \cap \top_i| + 1 \leq |T \cap \top_i|$.

From \Cref{lem:not-much-decrease} we know that for each voter $i \in S$ we have $p(\top_i) < |W \cap \top_i| + 1 \leq |T \cap \top_i|$. Let us define a fractional committee $p'$ such that $p'(c) = 1$ for $c \in T$ and $p'(c) = 0$ otherwise. Hence, $S$ and $p'$ witness also the violation of the core for $p$, which is contradictory with \Cref{thm:randomized-Core}.
\end{proof}

\subsubsection{Core Stability for Strict r-STC Elections}\label{sec:algorithm-strict}

We will now assume that $E$ is a strict r-STC election. Similarly as in case of approval preferences, we start by proving that the \textsc{CommitteeCore} algorithm elects exactly $k$ candidates.

\begin{lemma}\label{lem:exactly-k-stc}
\textsc{CommitteeCore} for strict r-STC election $E$ elects exactly $k$ candidates
\end{lemma}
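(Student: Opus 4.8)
The plan is to mirror the structure of the corresponding lemma for LC elections, but using the r-STC property in place of LC, and to argue that \textsc{MedianRule} selects exactly $k_2$ distinct candidates, so that $W = W_1 \cup W_2$ has exactly $|W_1| + k_2 = |W_1| + (k - |W_1|) = k$ candidates. Since the candidates of $W_1$ are disjoint from the candidates produced in the second phase (the $W_1$-candidates and the voters they represent are removed before \textsc{MedianRule} runs), it suffices to show two things: that the $k_2$ median voters in $E_2$ receive $k_2$ pairwise-distinct representatives, and that these representatives are in turn distinct from everything in $W_1$. The latter is immediate because $E_2$ contains none of the $W_1$-candidates. So the crux is the distinctness of the median voters' representatives.

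First I would set up notation exactly as in the LC case: recall that $E_2$ is still r-STC (by the resorting noted in the algorithm description), let $\dord$ be the witnessing linear order over the voters of $E_2$, and recall that the median voters are $m_q = (q-1)\cdot \nicefrac{n}{k} + 1$ for $q \in [k_2]$, with exactly $\nicefrac{n}{k} - 1$ non-removed voters strictly between consecutive median voters. Suppose for contradiction that two distinct median voters $i \dord j$ satisfy $r_i = r_j =: c$. The strategy is to show that then $c$ would have been the representative of \emph{every} non-removed voter weakly between $i$ and $j$, forcing $c$ to represent at least $\nicefrac{n}{k}$ voters; but then $p(c) = 1$, so $c \in W_1$ and would have been removed before the second phase, contradicting the fact that $c$ is selected by \textsc{MedianRule} from $E_2$.

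The key step is therefore a monotonicity-of-representatives statement for strict r-STC elections analogous to \Cref{obs:order_preserved}: if $i \dord v \dord j$ are non-removed voters and $r_i = r_j = c$, then $r_v = c$. Here is where I would invoke the single-top-crossing structure of $E_2$ (which holds because $E_2$ is r-STC, hence in particular STC). Since each $r_v$ is, by the definition of \textsc{BestRepresentative}, the top-ranked \emph{available} candidate of $v$ at the time $v$ is processed, and since candidates are only removed once fully elected, the candidate $c$ is available to $v$ whenever it is available to the later voter $j$; I would then use STC applied to the triple $i \dord v \dord j$ together with the candidate $c$ to show that $v$ cannot rank any still-available candidate strictly above $c$. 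Concretely, if $v$ preferred some available $a$ to $c$ with $a$ being $v$'s true representative, then $a \succ_v c$ combined with $c \succ$-or-$=$ at the endpoints would contradict the STC condition $a \succ_x \top_y \implies \top_y \succ_z a$ (taking $\top_y = c$, since $c$ is the common top-available choice at the endpoints); the details require checking that $c$ is indeed the top available candidate of both endpoints, which follows from $r_i = r_j = c$.

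The main obstacle I anticipate is precisely this transfer of the STC crossing condition from the \emph{original} rankings to the \emph{truncated-by-availability} rankings: STC is a statement about $\top_y$, the top of the full ranking, whereas $r_v$ is the top of the \emph{restricted} ranking over currently-available candidates. The definition of r-STC is exactly what rescues this — because every subinstance obtained by deleting candidates is again STC, I can apply STC to the subinstance in which all already-elected candidates (those in $W_1$, plus any removed mid-phase) have been deleted, so that $c$ and $a$ genuinely are top candidates in that subinstance. I would make sure to argue that the set of candidates available when processing the median voters is a fixed deletion of the same candidate set for all voters weakly between $i$ and $j$ (or handle the monotone shrinking of availability carefully), and that resorting preserves $\dord$; once that bookkeeping is in place, the contradiction $p(c)=1 \Rightarrow c \in W_1$ closes the argument and yields exactly $k$ elected candidates.
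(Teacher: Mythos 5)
Your proposal follows essentially the same route as the paper's proof---which is in fact only a three-line sketch (it even invokes a voter $v$ that is never introduced, clearly meaning an arbitrary voter between the two median voters), so your write-up supplies precisely the details the paper glosses over. The skeleton matches: if two median voters $i \dord j$ in $E_2$ shared a representative $c$, then every voter between them would also have representative $c$; hence $c$ would represent at least $\nicefrac{n}{k}$ voters in the first phase, so $p(c)=1$ and $c$ would belong to $W_1$ and have been removed, contradicting $c \in C_2$. You also correctly identify why the \emph{recursive} part of r-STC is indispensable: STC speaks about tops of rankings, while representatives are tops of availability-restricted rankings, and one must pass to a candidate-deleted subinstance to bridge the two.

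Two details in your sketch need repair, though both are fixable within your own framework. First, your STC instantiation as written would fail: you propose ``taking $\top_y = c$,'' but in the only available triple $i \dord v \dord j$ the middle voter is $v$, whose top in $E_2$ is the deviating candidate $a \neq c$; neither endpoint can play the role of the middle voter $y$. The correct application sets $(x,y,z)=(i,v,j)$ with $c$ in the role of the crossing candidate: since $\top_i = \{c\}$ and $\top_v = \{a\}$ with $a \neq c$, we have $c \succ_i \top_v$, so STC forces $\top_v \succ_j c$, i.e., $a \succ_j c$, contradicting $\top_j = \{c\}$. Second, the availability bookkeeping you flag (monotone shrinking of the sets $P_v$, and whether the resorting of $E_2$ is compatible with the phase-one processing order---it need not be, since the STC witnessing order for the subinstance may differ from the original) dissolves once you note the order-independent fact that every voter $v$ surviving into $E_2$ satisfies $C_2 \subseteq P_v$ and $r_v \in C_2$ (otherwise $v$ would have been removed with her representative), hence $r_v$ is $v$'s genuine top candidate in the subinstance $E_2$. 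With that observation, STC for $E_2$---guaranteed by r-STC---applies directly to true tops, with no reference to mid-phase availability or to the phase-one order, and your counting argument then closes the proof exactly as in the paper.
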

\begin{proof}
We need to show that \textsc{MedianRule} elects exactly $k_2$ candidates.
Suppose for the sake of contradiction that there are two median voters $i, j$ in $E_2$ such that $r_i = r_j$. From STC it follows that $r_v = r_i$. But this means that after running \textsc{BestRepresentative}, $r_i$ was a representative for at least $\nicefrac{n}{k}$ voters and was not elected, a contradiction.
\end{proof}

Now we prove a general statement about the application of \textsc{MedianRule} to STC elections. 

\begin{lemma}\label{lem:median-rule}
Consider an STC election $E=(N, C, k)$ and apply \textsc{MedianRule} to $E$ to obtain the committee $W$. If $|W|=k$, then $W$ is in the core.
\end{lemma}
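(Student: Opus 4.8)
\textbf{Proof plan for \Cref{lem:median-rule}.}
The plan is to show that no coalition $S$ together with a set $T$, $|T|\le k\cdot\nicefrac{|S|}{n}$, can make every voter in $S$ strictly prefer $T$ to the \textsc{MedianRule} committee $W=\{r_{m_1},\dots,r_{m_k}\}$. The key structural fact I would exploit is the STC order $\dord$ over voters together with \Cref{obs:order_preserved}-style monotonicity of representatives: because the median voters $m_1\dord m_2\dord\cdots\dord m_k$ partition the voter order into consecutive blocks of size $\nicefrac{n}{k}$, and because representatives are chosen greedily from the top, each $r_{m_q}$ should ``cover'' the block of voters around $m_q$. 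The first step is therefore to establish, from STC, that for a voter $v$ lying in the $q$-th block (i.e.\ between consecutive median voters), the candidate $r_{m_q}$ is ranked by $v$ at least as high as it was ranked by $m_q$ relative to the rest of $P_{m_q}$, so that $v$ weakly approves of having $r_{m_q}$ in the committee.

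Next I would set up the deviation and count. Suppose for contradiction that $S$ and $T$ witness a core violation, so every $i\in S$ strictly prefers $T$ to $W$ under the lexicographic extension \eqref{eq:def-ext-lex}. Since $|T|\le k\cdot\nicefrac{|S|}{n}$, a counting/pigeonhole argument on the blocks should give the contradiction: I would assign to each voter $i\in S$ the first position $\sigma_i$ at which $i$ strictly prefers $T$, i.e.\ $|\pos_i(\sigma_i)\cap T|>|\pos_i(\sigma_i)\cap W|$ while agreeing above. The goal is to show that the candidates in $T$ that each such voter needs, when aggregated across $S$, exceed the budget $k\cdot\nicefrac{|S|}{n}$. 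Concretely, for each median voter $m_q$ whose block meets $S$, the representative $r_{m_q}$ is the top available candidate for $m_q$, so any strict improvement for voters in that block must come from candidates that $m_q$ ranked strictly above $r_{m_q}$ — and those were all \emph{already elected} (placed in $W_1$) or otherwise accounted for. The STC property is what lets me transfer this statement from the single voter $m_q$ to all voters of its block: every voter in the block ranks $r_{m_q}$ at the top among the still-available candidates, so to beat $W$ each of them must draw on candidates ranked above $r_{m_q}$, and STC forces these ``better'' candidates to form a consistent interval of voters, giving a clean lower bound on how many distinct candidates $T$ must contain.

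The crux is the following budget inequality: I would argue that the voters in $S$ falling into a common block can be improved only by a limited number of committee slots' worth of candidates, and that across all blocks the number of candidates $T$ must supply to satisfy all of $S$ is at least $|S|\cdot\nicefrac{k}{n}$ rounded up past the budget, contradicting $|T|\le k\cdot\nicefrac{|S|}{n}$. The natural way to make this rigorous is to mimic the fractional argument: associate to $W$ the fractional committee $p$ produced by \textsc{BestRepresentative}, use \Cref{thm:randomized-Core} to know $p$ is in the fractional core, and then show that for the restricted deviations available in an STC profile, $W$ inherits the stability of $p$ because $W$ and $p$ agree on the elected candidates and, on the partially elected ones, STC guarantees $|W\cap\pos_i([\sigma]))\ge p(\pos_i([\sigma]))$ at every prefix that matters — so a strict improvement of $T$ over $W$ would already be a strict improvement over $p$.

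The step I expect to be the main obstacle is precisely this transfer from the fractional committee $p$ to the discrete committee $W$ at the level of \emph{each prefix} $\pos_i([\sigma])$ of every voter $i$, analogous to \Cref{lem:not-much-decrease} in the approval case but now for strict rankings and at all prefixes rather than just $\top_i$. In the approval proof a single inequality $|W\cap\top_i|+1>p(\top_i)$ sufficed because lexicographic preference over approval ballots only looks at the top class; for strict r-STC preferences the lexicographic extension compares prefix by prefix, so I expect to need a prefix-wise analogue showing $|W\cap\pos_i([\sigma])|\ge \lceil p(\pos_i([\sigma]))\rceil$ (or an off-by-one version), and proving this will require carefully combining the block/interval structure of representatives under STC with the rounding behaviour of \textsc{MedianRule}. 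Handling the interaction between already-elected candidates $W_1$ (removed before the median phase) and the median selection on the truncated instance $E_2$ is the delicate bookkeeping I would allocate the most care to.
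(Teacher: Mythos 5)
Your proposal has a genuine gap at exactly the step you flag as the main obstacle, and that step cannot be repaired in the form you suggest. The transfer from the fractional committee $p$ to the discrete committee $W$ via prefix-wise inequalities $|W\cap\pos_i([\sigma])|\geq\lceil p(\pos_i([\sigma]))\rceil$ (or an off-by-one version) does not yield the implication you need, namely that $T\extord_i W$ forces $T\extord_i p$. Under strict preferences the lexicographic comparison against a \emph{fractional} committee hinges on exact equality of values at every earlier position, and these equalities fail as soon as $p$ places a sliver on a highly-ranked candidate outside both $T$ and $W$. Concretely: take a non-median voter $i\in S$ whose favourite $c$ received only $p(c)=\nicefrac{k}{n}$ from \textsc{BestRepresentative}, with $c\notin W$ and $c\notin T$; then $p$ beats the indicator of $T$ already at position $1$, so $(S,T)$ violating the core for $W$ does \emph{not} witness a violation for $p$, and \Cref{thm:randomized-Core} gives no contradiction. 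This is precisely why such a transfer works in the approval case (\Cref{lem:not-much-decrease} only needs the top class, where the comparison reduces to counting) but not here; the paper explicitly warns that the proof techniques for the two models ``differ significantly.'' Your first route (pigeonhole over blocks using the first improving position $\sigma_i$) is left schematic and also does not close on its own: under the lexicographic extension a coalition can improve different voters at arbitrary depths using candidates that beat different medians' representatives, and no aggregation mechanism is given. A smaller point: \Cref{lem:median-rule} is a standalone statement about \textsc{MedianRule} applied to an arbitrary STC election, where each median voter's representative is simply her top choice; the bookkeeping with $W_1$, $P_i$ and $E_2$ that you plan to handle inside the lemma actually belongs to the enclosing \Cref{thm:core_r_stc}, which invokes the lemma through \Cref{cor:push-back} and r-STC.

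The paper's actual argument is structurally different and supplies the missing idea: a minimal counterexample in $k$ combined with a split of the election. Assuming $(S,T)$ witnesses a violation, one first shows $|W\setminus T|\geq 2$, then fixes the candidate $a\in W\setminus T$ chosen by the \emph{greatest} median voter $i\cdot\nicefrac{n}{k}+1$. The STC property enters only once, but crucially: for each $b\in T$, the set $S_b$ of voters in $S$ preferring $b$ to $a$ lies entirely in $[i\cdot\nicefrac{n}{k}]$ or entirely in its complement. This lets one split $E$ into $E_{\low}$ and $E_{\grt}$ (on which \textsc{MedianRule} elects $W_{\low}\sqcup W_{\grt}=W$), split $S$ and $T$ accordingly, observe by counting that at least one side stays within its proportional budget, and verify that on that side the restricted pair still witnesses a core violation (the key checks being $W_{\low}\cap T_{\grt}=\emptyset$ and that each $j\in S_{\low}$'s improving candidate lands in $T_{\low}$). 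This contradicts minimality of $k$. If you want to salvage your write-up, replace the fractional-transfer step with this divide-and-conquer induction; the interval structure you correctly extract from STC is exactly what powers the one-sidedness of the sets $S_b$, but it must be used to recurse on sub-elections rather than to compare $W$ against $p$.
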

\begin{proof}
Towards a contradiction suppose that the statement of the lemma is not true. Without loss of generality, assume that $E$ is an election with the smallest $k$ among those for which the statement of the lemma does not hold. 
Let $S$ and $T$ be subsets of voters and candidates, respectively, that witness that the committee returned by the median rule does not belong to the core. 

Observe that there are at least two candidates from $W$ that do not belong to $T$. Indeed, if there were only one such candidate, we would have that $|T|=|W|$ (as $T \setminus W$ is nonempty) and $|S|=n$. In particular, in such a case all median voters would belong to $S$. Consequently, the most preferred candidates of the median voters would belong to $T$, hence $W \subseteq T$, a contradiction.

Let us fix a candidate $a\in W \setminus T$ that is elected by the greatest median voter $(i\cdot \nicefrac{n}{k} + 1)$. In particular, $i\neq 0$. For a candidate $b \in T$ by $S_b\subseteq S$ we denote the subset of voters in $S$ preferring $b$ to $a$. Since $E$ is single-top-crossing, it holds that either $S_b \subseteq [i\cdot \nicefrac{n}{k}]$ or $S_b \subseteq N \setminus [i\cdot \nicefrac{n}{k}]$.

Now we split $E$ into two smaller elections $E_{\low}=([i\cdot \nicefrac{n}{k}], C, i)$ and $E_{\grt}=(N \setminus [i\cdot \nicefrac{n}{k}], C, k-i)$. By $W_{\low}$ and $W_{\grt}$ we denote the committees elected by the median rule for $E_{\low}$ and $E_{\grt}$, respectively. Observe that $W_{\low} \sqcup W_{\grt} = W$.

Let us also split $S$ and $T$ into two parts, as follows:
\begin{align*}
    &S_{\low} = S \cap [i\cdot \nicefrac{n}{k}], &&S_{\grt} = S \cap (N \setminus [i\cdot \nicefrac{n}{k}]), \\
    &T_{\low} = \{c\in T\colon S_c \subseteq [i\cdot \nicefrac{n}{k}]\}, &&T_{\grt} = \{c\in T\colon S_c \subseteq N\setminus [i\cdot \nicefrac{n}{k}]\} \text{.}
\end{align*}
Note that $S_{\low} \cup S_{\grt} = S$ and $T_{\low} \cup T_{\grt} = T$. Hence, if we had that both $|T_{\low}| > |S_{\low}|\cdot \nicefrac{n}{k}$ and $|T_{\grt}| > |S_{\grt}|\cdot \nicefrac{n}{k}=(|S|-|S_{\low}|)\cdot \nicefrac{n}{k}$, then we would have also $|T| > |S|\cdot \nicefrac{n}{k}$, a contradiction. Hence, for at least one of the pairs $(S_{\low}, T_{\low}), (S_{\grt}, T_{\grt})$ the opposite inequality holds. Without the loss of generality, assume that $|T_{\low}| \leq |S_{\low}|\cdot \nicefrac{n}{k}$.

We claim that the pair $(S_{\low}, T_{\low})$ witnesses the core violation for $E_{\low}$ and committee $W_{\low}$. 

Consider a voter $j \in S_{\low}$. We know that there exists a candidate $c\in T \setminus W$ such that $c\succ_j W \setminus T$.
First observe that $W_{\low}$ and $T_{\grt}$ are disjoint---indeed, for every candidate $b \in T_{\grt}$ we have that $S_b \subseteq N \setminus [i\cdot \nicefrac{n}{k}]$. As a result, there is no median voter in $[i\cdot \nicefrac{n}{k}]$ who prefers $b$ to $a$, hence $b \notin W_{\low}$. From this fact we conclude that $W_{\low} \setminus T_{\low} = W_{\low}\setminus T \subseteq W \setminus T$. Consequently, $c\succ_j W_{\low} \setminus T_{\low}$. 

Further, observe that  $c\in T_{\low}$. Indeed, voter $j$ prefers $c$ to $W \setminus T$, thus in particular $j$ prefers $c$ to $a$. Consequently, $j \in S_c$, and thus $S_c \subseteq S_{\low}$, from which we get that $c\in T_{\low}$.
Since $c\in T_{\low}$ and  $c\succ_j W_{\low} \setminus T_{\low}$, we get that $j$ prefers lexicographically $T_{\low}$ to $W_{\low}$.

Finally, we obtain that if the core was violated for $E$, it also needs to be violated for $E_{\low}$, which is contradictory to our assumption that $E$ minimizes the value of $k$.
\end{proof}

\begin{corollary}\label{cor:push-back}
In \textsc{CommitteeCore} algorithm, the committee $W_2$ is in the core for election $E_2$.
\end{corollary}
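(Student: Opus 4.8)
The plan is to derive the corollary as a direct application of \Cref{lem:median-rule} to the reduced election $E_2$. Two facts need to be established: that $E_2$ is single-top-crossing, and that the committee $W_2$ built by \textsc{CommitteeCore} is precisely the committee that \textsc{MedianRule} returns on $E_2$, with $|W_2| = k_2$.

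First I would argue that $E_2$ is STC. By assumption $E$ is r-STC, so the subinstance obtained by deleting the candidates of $W_1$ is STC. Since the STC condition is quantified over triples of voters, it is preserved under deleting voters (removing voters only shrinks the set of triples that must be checked). Hence $E_2$, which is obtained from $E$ by removing the candidates of $W_1$ together with the voters they represent, is STC for the order $\dord$ inherited after the renumbering described in the algorithm. This is the step that genuinely consumes recursiveness: plain STC need not survive candidate deletion, because the top sets $\top_y$ can change, which is exactly why r-STC was introduced.

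Second, I would check that $W_2$ coincides with the \textsc{MedianRule} output on $E_2$. Recall that $\nicefrac{n_2}{k_2} = \nicefrac{n}{k}$, so the voters $m_q = (q-1)\cdot \nicefrac{n}{k}+1$ are exactly the median voters of $E_2$, and $W_2 = \{r_{m_q} : q \in [k_2]\}$. It remains to observe that each $r_{m_q}$ is the top candidate of $m_q$ within $E_2$: since $m_q$ survives into $E_2$, its representative $r_{m_q}$ is not in $W_1$; and any candidate that $m_q$ ranks above $r_{m_q}$ must already have been elected (filled to probability one) by the time $m_q$ was processed by \textsc{BestRepresentative}, hence lies in $W_1$ and is absent from $E_2$. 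Thus $r_{m_q}$ is $m_q$'s most preferred remaining candidate, matching the rule analysed in \Cref{lem:median-rule}. Finally, \Cref{lem:exactly-k-stc} (whose proof shows that \textsc{MedianRule} selects exactly $k_2$ distinct candidates) gives $|W_2| = k_2$, so the hypothesis $|W| = k$ of \Cref{lem:median-rule} is met for $E_2$, and the lemma yields that $W_2$ is in the core for $E_2$.

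The only delicate point is the first step, establishing that $E_2$ stays STC, since it is here that the stronger r-STC assumption is used; the rest is bookkeeping that identifies the representatives computed in the first phase with the fresh top choices of the median voters.
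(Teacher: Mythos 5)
Your proposal is correct and follows essentially the same route the paper intends: the paper states \Cref{cor:push-back} without proof precisely because it is the combination of \Cref{lem:exactly-k-stc} (giving $|W_2|=k_2$) with \Cref{lem:median-rule} applied to $E_2$, which is STC thanks to the recursiveness of r-STC under candidate deletion. Your explicit bookkeeping---that STC survives voter deletion, and that each $r_{m_q}$ is $m_q$'s top candidate within $E_2$ because any candidate $m_q$ prefers to $r_{m_q}$ was already fully elected (hence in $W_1$) when \textsc{BestRepresentative} processed $m_q$, while $r_{m_q}\notin W_1$ since $m_q$ survives into $E_2$---correctly fills in exactly the details the paper leaves implicit.
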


Now we are ready to prove the main theorem in this subsection:

\begin{theorem}\label{thm:core_r_stc}
Committees elected by \textsc{CommitteeCore} are in the core.
\end{theorem}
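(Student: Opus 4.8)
The plan is to reduce any putative core violation for $W$ in $E$ to a core violation for $W_2$ in the reduced election $E_2$, contradicting \Cref{cor:push-back}. Recall that $W = W_1 \sqcup W_2$ with $|W| = k$ by \Cref{lem:exactly-k-stc}, that $|W_2| = k_2$, and that $\nicefrac{n_2}{k_2} = \nicefrac{n}{k}$. I would first record the strict-model reading of the lexicographic extension, exactly as used in the proof of \Cref{lem:median-rule}: a voter $i$ strictly prefers a committee $T$ to $W$ precisely when the $i$-topmost candidate of the symmetric difference $T \triangle W$ lies in $T$, equivalently when there is some $c \in T \setminus W$ with $c \succ_i W \setminus T$.

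Suppose for contradiction that $(S, T)$ witnesses a core violation for $W$, so $|T| \le k \cdot \nicefrac{|S|}{n}$ and every $i \in S$ strictly prefers $T$ to $W$. Write $N_2$ for the voter set of $E_2$, put $\hat S = S \cap N_2$ and $\hat T = T \setminus W_1$, and let $S_{\mathrm{rem}} = S \setminus N_2$ collect the voters of $S$ removed after the first phase. I would then establish a \emph{preference-transfer} claim: every $i \in \hat S$ strictly prefers $\hat T$ to $W_2$ in $E_2$. This is a routine restriction argument. The witnessing candidate $c \in T \setminus W$ satisfies $c \notin W_1$, hence $c \in \hat T \setminus W_2$; moreover $W_2 \setminus \hat T = W_2 \setminus T \subseteq W \setminus T$, so $c \succ_i W_2 \setminus \hat T$. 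Since the relative order of the surviving candidates is unchanged when we pass to $E_2$, the candidate $c$ witnesses $\hat T \succ_i W_2$.

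The main work is the \emph{budget-transfer} claim: $|\hat T| \le k \cdot \nicefrac{|\hat S|}{n} = k_2 \cdot \nicefrac{|\hat S|}{n_2}$. The crux is to bound $|S_{\mathrm{rem}}|$. For a removed voter $v$, its representative $r_v$ together with every candidate it ranks above $r_v$ is elected with probability one, hence lies in $W_1 \subseteq W$; consequently every candidate outside $W$ sits strictly below $r_v$ in $v$'s ranking. If some of $v$'s top $\sigma_v$ candidates were missing from $T$, then the topmost candidate of $v$ in $T \triangle W$ would lie in $W \setminus T$ (every member of $T \setminus W$ is ranked below position $\sigma_v$), so $v$ would prefer $W$ to $T$ --- impossible for $v \in S$. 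Thus all of $v$'s top $\sigma_v$ candidates, and in particular $r_v$, belong to $T$, so $r_v \in T \cap W_1$. Since each candidate of $W_1$ is the representative of exactly $\nicefrac{n}{k}$ voters, the map $v \mapsto r_v$ sends $S_{\mathrm{rem}}$ into $T \cap W_1$ with fibers of size at most $\nicefrac{n}{k}$, giving $|S_{\mathrm{rem}}| \le \nicefrac{n}{k} \cdot |T \cap W_1|$. Hence $|\hat T| = |T| - |T \cap W_1| \le k \cdot \nicefrac{|S|}{n} - \nicefrac{k}{n} \cdot |S_{\mathrm{rem}}| = \nicefrac{k}{n} \cdot |\hat S|$, as claimed.

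Finally I would assemble the two claims. If $\hat S \neq \emptyset$, then $(\hat S, \hat T)$ is a valid core violation for $W_2$ in $E_2$, contradicting \Cref{cor:push-back}. If instead $\hat S = \emptyset$, the budget bound forces $|T \cap W_1| \ge |T|$, i.e.\ $T \subseteq W_1 \subseteq W$; then $T \setminus W = \emptyset$, so no voter can strictly prefer $T$ to $W$, contradicting $S \neq \emptyset$. I expect the budget-transfer step to be the genuine obstacle: the preference transfer is immediate from restriction, whereas pinning down that each removed voter of $S$ must in fact spend its representative inside $T \cap W_1$ --- so that the candidates removed into $W_1$ exactly account for the lost voter budget --- is where the interaction between \textsc{BestRepresentative} and the deletion of $W_1$ must be handled with care.
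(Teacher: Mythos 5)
Your proposal is correct and follows essentially the same route as the paper's own proof: both reduce the putative violation $(S,T)$ to the pair $(S\cap N_2,\, T\setminus W_1)$ for the reduced election $E_2$, using that every removed voter of $S$ must have its representative in $T\cap W_1$ (so the budget transfers, each member of $W_1$ representing exactly $\nicefrac{n}{k}$ voters) and that $T \extord_i W$ restricts to $(T\setminus W_1)\extord_i W_2$, contradicting \Cref{cor:push-back}. Your write-up merely spells out details the paper states tersely, namely the lexicographic argument that $r_v\in T$ for removed $v\in S$ and the degenerate case $S\cap N_2=\emptyset$.
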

\begin{proof}
For the sake of contradiction suppose that the statement of the theorem is not true. Then there exist a set $S \subseteq N$ and a set $T \subseteq C$ witnessing the violation of the condition of the core. For every candidate $c\in C$, by $R(c)$ we denote set $\{i\in N\colon r_i = c\}$. Note that for a candidate $c \in W_1$ and a voter $i \in S$ such that $i \in R(c)$, we have $c \in T$. Hence, 
\begin{align*}
S \cap \bigcup_{c \in T \cap W_1} R(c) = S \cap \bigcup_{c \in W_1} R(c).
\end{align*}
Consider now sets $S \cap N_2$ and $T \cap C_2$ (recall that $E_2=(N_2, C_2, k_2)$ is the instance obtained after the first step of our algorithm). It holds that:
\begin{align*}
    |T \cap C_2| &= |T| -  |T \cap W_1| \leq |S|\cdot \nicefrac{k}{n} - \left|\bigcup_{c \in T \cap W_1} R(c) \right|\cdot \nicefrac{k}{n} \leq |S|\cdot \nicefrac{k}{n} - \left|S \cap \bigcup_{c \in T \cap W_1} R(c) \right|\cdot \nicefrac{k}{n} \\
                     &\leq |S \setminus \bigcup_{c \in W_1} R(c)| \cdot \nicefrac{k}{n}  = |S \cap N_2| \cdot \nicefrac{k}{n} =  |S \cap N_2| \cdot \nicefrac{k_2}{n_2} \text{.}
\end{align*}

Further, for each voter $i \in S \cap N_2$ we have that: 
\begin{align*}
T \extord_i W \implies (T\setminus W_1) \extord_i (W \setminus W_1) \implies (T \cap C_2) \extord_i W_2 \text{.}
\end{align*}
Consequently, $S \cap N_2$ and $T \cap C_2$ witness the violation of the core condition for committee $W_2$, which is contradictory to \Cref{cor:push-back}.
\end{proof}

\begin{corollary}
The core is always nonempty and can be found in polynomial time for the following classes of voters' preferences: 
\begin{inparaenum}[(1)]
\item voter-interval,
\item candidate-interval,
\item single-peaked, and
\item single-crossing preferences.
\end{inparaenum}
\end{corollary}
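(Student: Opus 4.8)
The plan is to derive the corollary directly from the two core-stability guarantees already proved, namely \Cref{thm:discrete-Core} for approval LC elections and \Cref{thm:core_r_stc} for strict r-STC elections, by showing that each of the four listed domains is captured by one of these two umbrella classes. No new algorithmic idea is needed: in every case the committee is produced by \textsc{CommitteeCore}, so both non-emptiness and the polynomial running time follow once the domain memberships are established. Throughout I would recall \Cref{obs:fractions_integral}, which lets us assume $\nicefrac{n}{k}$ is integral, as \textsc{CommitteeCore} requires.

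For the two approval domains the argument is immediate. By \Cref{obs:ci-vi-implies-lc}, every voter-interval election and every candidate-interval election is linearly consistent, so \Cref{thm:discrete-Core} guarantees that \textsc{CommitteeCore} returns a committee in the core; since the core is therefore non-empty, cases (1) and (2) are settled.

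For the two strict domains I would argue that both single-peaked and single-crossing elections are r-STC, and then invoke \Cref{thm:core_r_stc}. The key point is that single-top-peakedness and single-top-crossingness are weakenings of single-peakedness and single-crossingness, respectively: a strict single-peaked profile trivially satisfies STP, because the STP condition is exactly the single-peaked condition restricted to the case where the middle candidate is a top candidate, and hence by \Cref{prop:stc_and_stp} it is STC; a strict single-crossing profile satisfies STC directly, by instantiating the single-crossing condition with $b=\top_y$. It then remains to upgrade STC to r-STC, i.e.\ to verify that every subinstance obtained by deleting candidates is again STC. This holds because both single-peakedness and single-crossingness are preserved under candidate deletion — one simply restricts the candidate axis, respectively keeps the voter order — so each such subinstance is still single-peaked (resp.\ single-crossing), hence STC. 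Thus both domains are r-STC, and \Cref{thm:core_r_stc} yields non-empty cores for cases (3) and (4).

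Finally, for the running time, \textsc{CommitteeCore} consists of a single pass of \textsc{BestRepresentative} over the $n$ voters, each step choosing a top unelected representative among at most $m$ candidates, followed by \textsc{MedianRule}, which merely reads off the representatives of the $k_2$ median voters; this is clearly polynomial once the witnessing order $\dord$ over $N\cup C$ (for LC) or over $N$ (for STC) is available, and such an order is computable in polynomial time by the standard recognition algorithms for all four domains. I expect the only genuinely non-routine ingredient to be the recursive closure underpinning r-STC — that candidate deletion preserves the structural property — but this is standard for single-peaked and single-crossing profiles; everything else is bookkeeping that connects the already-established theorems to the named domains.
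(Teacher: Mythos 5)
Your proposal is correct and follows essentially the same route the paper intends (the corollary is stated without a printed proof, but it is exactly the assembly of \Cref{obs:ci-vi-implies-lc} with \Cref{thm:discrete-Core} for cases (1)--(2), and the containment of single-peaked and single-crossing profiles in r-STC, via closure under candidate deletion, with \Cref{thm:core_r_stc} for cases (3)--(4)). The only difference is cosmetic: where the paper justifies the inclusion in STC by citing that single-peaked and single-crossing preferences are top-monotonic and then invoking the equivalences TM $=$ STP $=$ STC for strict orders, you verify SP $\Rightarrow$ STP and SC $\Rightarrow$ STC directly from the definitions (modulo a harmless relabeling --- in the single-crossing condition one should instantiate the candidate pair with $a=\top_y$, since the definition's premise is $a \succ_y b$), which is an equally valid and slightly more self-contained derivation.
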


In \Cref{ex:tm-core-violation} below we show that the condition of recursiveness in the definition of the class of r-STC preferences is necessary for the existence of the core. Thus, in a way \Cref{thm:core_r_stc} gives a rather precise condition on the existence of the core for strict voters' preferences.  For approval preferences one cannot easily argue that the conditions are precise, since it is still a major open question whether a core-stable committee exists in each approval election.

\begin{theorem}\label{ex:tm-core-violation}
There is a top-monotonic election with strict preferences, where the core is empty. 
\end{theorem}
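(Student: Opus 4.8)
The plan is to construct an explicit top-monotonic election with strict preferences and show, by exhausting the finitely many candidate committees, that every one of them is blocked by some deviating coalition. I would keep the instance as small as possible: aim for a committee size $k$ and a candidate set $C$ positioned along a single axis $\dord$ (since single-peaked elections are automatically top-monotonic, a 1D-Euclidean-style placement is the easiest way to guarantee the top-monotonic property for free). The natural starting point is to revisit the violation structure already exhibited in \Cref{ex:monroe} and \Cref{ex:stv}: those show that \emph{specific} rules miss the core on single-peaked instances, but here I need the stronger statement that \emph{no} committee at all lies in the core. So the target is a profile where the ``central'' voters always have a cheap deviation no matter which $k$ candidates are chosen.

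The key steps, in order, would be: (i) fix $n$ and $k$ with $\nicefrac{n}{k}$ integral (by \Cref{obs:fractions_integral} this is harmless), and lay out candidates and voter peaks on a line so that single-peakedness — hence top-monotonicity — is transparent; (ii) choose the geometry so that there are several overlapping ``clusters'' of voters, each of size a multiple of $\nicefrac{n}{k}$, whose members share a strongly preferred local candidate; (iii) argue that any committee $W$ of size $k$ must, by a counting/pigeonhole argument, fail to place a candidate sufficiently close to at least one such cluster, so that the cluster together with its favored candidate(s) $T$ forms a blocking coalition with $|T| \le k\cdot\nicefrac{|S|}{n}$; (iv) verify that the deviation is genuine under the lexicographic extension $\extord_i$ of \eqref{eq:def-ext-lex}, i.e. each voter in $S$ strictly prefers $T$ to $W$ at the first position where they differ. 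Because the preferences are strict and single-peaked, checking $T \extord_i W$ reduces to comparing how many top-ranked candidates each voter gets in $T$ versus $W$, which the geometry makes easy to control.

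The main obstacle will be the universal quantifier over committees: I cannot simply display one bad committee, I must rule out \emph{all} $\binom{|C|}{k}$ of them. The clean way to handle this is to not enumerate but to argue structurally — identify a small family of overlapping coalitions $S_1,\dots,S_t$ (each of size exactly $\nicefrac{n}{k}$ or a small multiple) whose ``demands'' $T_1,\dots,T_t$ cannot all be simultaneously satisfied by any single size-$k$ committee, because the demanded candidates are pairwise incompatible or overlap in a way that exceeds the budget $k$. Establishing that these demands are jointly infeasible, while each individual demand is affordable to its coalition, is the crux; I would search for the minimal configuration (likely something like $k=2$ or $k=3$ with three mutually-overlapping voter groups centered on distinct candidates) where three coalitions each of weight $\nicefrac{1}{k}$ of the electorate force inclusion of three distinct candidates but only $k<3$ of them can be chosen. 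A subtlety to watch is that the blocking set $T$ need not be a subset of $W$'s complement and that ties in the counting must be broken correctly by the lexicographic rule, so I would make the clusters tight enough that the preference gaps are strict and the inequality $|T\cap\top_i| > |W\cap\top_i|$ (or the first-differing-position version) holds with room to spare for every $i\in S$.
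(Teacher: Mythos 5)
Your approach cannot work, and the obstruction comes from this very paper's main positive result. You propose to build the empty-core instance as a single-peaked (1D-Euclidean-style) election, on the grounds that single-peakedness implies top-monotonicity ``for free.'' But for strict preferences, top-monotonicity is equivalent to STC (\Cref{prop:stc_and_stp} and the preceding proposition), single-peakedness is preserved under deleting candidates, and hence every strict single-peaked election is r-STC. \Cref{thm:core_r_stc} (see also the corollary following it) then shows that \textsc{CommitteeCore} outputs a committee in the core for \emph{every} such election --- so the core is provably nonempty on the entire domain in which you are searching. Your step (iii) is where the plan concretely collapses: in a single-peaked profile, any family of interval-shaped coalitions of size $\ell\cdot\nicefrac{n}{k}$ can have their demands met by (roughly) median candidates within each interval, which is exactly the mechanism behind \Cref{lem:median-rule}; no pigeonhole configuration of ``three coalitions of weight $\nicefrac{1}{k}$ demanding three incompatible candidates with $k<3$'' exists there, because the blocking structure you need is inherently cyclic and single-peakedness excludes it.

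The missing idea is that the witness must be top-monotonic but \emph{not} recursively so --- indeed, this theorem is precisely what shows the recursiveness condition in r-STC is necessary. Top-monotonicity constrains only the candidates that appear in some voter's top position, so the paper's proof concentrates all top positions on just two candidates $g$ and $h$ (making the profile trivially single-top-crossing) and hides the hard combinatorics \emph{below} the tops: six cloned Condorcet cycles of $100$ candidates each, with six voter groups of $100$ ordering the clone blocks cyclically. With $k=7$, any committee must contain $g$ and $h$ (each is top for far more than $\nicefrac{n}{k}$ voters), leaving five seats for six blocks; a symmetry and counting argument then always yields a coalition of about $200$ voters that blocks with a pair such as $\{g,c\}$ or $\{g,a\}$. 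The Condorcet-cycle structure buried beneath the tops is essential, and it is exactly what your 1D geometry forbids; so rather than a small Euclidean instance, you should have looked for a construction that exploits the slack top-monotonicity leaves at non-top positions.
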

\begin{proof}
Let $A$ be a Condorcet cycle consisting of $r = 100$ candidates:
\begin{align*}
&a_1 \succ a_2 \succ \ldots \succ a_r \\
&a_2 \succ a_3 \succ \ldots \succ a_r \succ a_1 \\
&\ldots \\
&a_r \succ a_1 \succ a_2 \succ \ldots \succ a_{r-1}
\end{align*}
Now let $B, C, D, E$ and $F$ be five clones of $A$. Thus in $A \cup B \cup \ldots \cup F$ we have $6r = 600$ candidates. We add two more candidates, namely $g$ and $h$.

Consider the following profile with $600$ voters:
\begin{align*}
g \succ A \succ B \succ C \succ D \succ E \succ F \succ h \\
g \succ B \succ C \succ A \succ E \succ F \succ D \succ h \\
g \succ C \succ A \succ B \succ F \succ D \succ E \succ h \\
h \succ D \succ E \succ F \succ A \succ B \succ C \succ g \\
h \succ E \succ F \succ D \succ B \succ C \succ A \succ g \\
h \succ F \succ D \succ E \succ C \succ A \succ B \succ g
\end{align*}
For example, the first two votes in this profile are:
\begin{align*}
&g \succ a_1 \succ a_2 \succ \ldots \succ a_r \succ b_1 \succ b_2 \succ \ldots  \succ b_r  \succ  \ldots \ldots  \succ  f_1 \succ f_2 \succ \ldots \succ f_r  \succ h \\
&g \succ a_2 \succ a_3 \succ \ldots \succ a_1 \succ b_2 \succ b_3 \succ \ldots  \succ b_1  \succ  \ldots \ldots  \succ  f_2 \succ f_3 \succ \ldots \succ f_1  \succ h 
\end{align*}
The above profile is single-top-crossing since there are only two top-candidates, $g$ and $h$, and each of them crosses with each other candidate only once.

Let $k = 7$, and consider a committee $W$. We will show that $W$ does not belong to the core. Without loss of generality, we can assume that $g, h \in W$, as there exists more than $\nicefrac{600}{7}$ voters who rank each of these candidates as their favourite one. Further, since the profile is symmetric, without loss of generality we can also assume that it contains at most two candidates  from $A \cup B \cup C$. If the two candidates belong to the same clone, say $A$, then we take a candidate $c \in C$, and observe that 200 voters (the second and the third group) prefer $\{c, g\}$ over $W$. Otherwise, if the two candidates are from two different clones, say $A$ and $B$ (the situation is symmetric), then we take the clone which is preferred by the majority (in this context $A$) and select the candidate $a \in A$ that is preferred by $r-1$ voters to the member of $W \cap A$. There are $2r-2 = 198$ voters who prefer $\{g, a\}$ to $W$. Thus, $W$ does not belong to the core.
\end{proof}

\section{Extensions, Discussion and Open Questions}

In this work we have determined the existence of core-stable committees for a number of restricted domains both in the approval and in the ordinal models of voters' preferences. We have shown a polynomial time algorithm that returns committees belonging to the core for a subdomain of top-monotonic preferences, which includes single-peaked and single-crossing domains, and for a new class of voters' preferences which includes voter-interval and candidate-interval domains. At the same time we have shown top-monotonic ordinal elections in which no committee belongs to the core. We have additionally presented a number of results that give better insights into the structures of the known domains. In particular, our results give a better understanding of the class of top-monotonic preferences.

We conclude with one interesting observation and one important open question.

\subsection{Core and (Full) Local Stability}\label{sec:max-core}

\citet{aziz2017condorcet} proposed the concept of \emph{full local stability}, which is equivalent to the definition of core-stability for ordinal preferences. Interestingly, while the concept of the core has been studied before in the context of ordinal committee elections, the equivalence of the two concepts has never been claimed so far.
Yet, most of the results in the work of \citet{aziz2017condorcet} are formulated for the concept of \emph{local stability}. Interestingly, this concept is also equivalent to the core-stability, but for a different preference extension: we say that voter $i$ weakly prefers $W$ over $T$ according to $\extord^{\max}$ preference extension if and only if she ranks her most preferred candidate in $W$ as high as her most preferred candidate in $T$. In words, according to the $\extord^{\max}$ extension we focus only on the single top preferred candidate in the committee and do not break ties lexicographically.


\begin{definition}[Local stability]
Consider an election $E$ and a value $q \in \mathbb{Q}$. A committee $W$ violates local stability for quota $q$ if there exists a group $S\subseteq N$ with $|S| \geq q$ and a candidate $c \in C \setminus W$ such that each voter from $S$ prefers $c$ to each member of $W$.
\end{definition}

\begin{proposition}
Local stability for quota $\lceil\nicefrac{n}{k}\rceil$ is equivalent to core stability for the $\extord^{\max}$ preference extension. 
\end{proposition}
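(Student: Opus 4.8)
The plan is to prove the equivalence by unfolding both definitions and showing that each condition fails in exactly the same circumstances. Recall that with the $\extord^{\max}$ extension, a voter $i$ weakly prefers $W$ over $T$ if and only if her most preferred candidate in $W$ is ranked at least as high as her most preferred candidate in $T$; equivalently, $i$ strictly prefers $T$ over $W$ if and only if $T$ contains a candidate that $i$ ranks strictly above every member of $W$. The first step is to record this reformulation precisely, since it is the bridge between the two notions.

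Next I would prove the two directions as contrapositives. For the direction \emph{local stability violated $\implies$ core violated}, suppose $W$ violates local stability for quota $q = \lceil\nicefrac{n}{k}\rceil$: there is a group $S$ with $|S| \geq q$ and a candidate $c \in C \setminus W$ that every voter in $S$ prefers to each member of $W$. I would take the deviating set $T = \{c\}$ of size $1$. The feasibility condition of \Cref{def:core} requires $|T| \leq k \cdot \nicefrac{|S|}{n}$, i.e.\ $1 \leq k \cdot \nicefrac{|S|}{n}$, which holds precisely because $|S| \geq \lceil\nicefrac{n}{k}\rceil \geq \nicefrac{n}{k}$. By the reformulation, every voter in $S$ strictly prefers $T = \{c\}$ over $W$ under $\extord^{\max}$, so $(S, T)$ witnesses a core violation.

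For the reverse direction \emph{core violated $\implies$ local stability violated}, suppose some pair $(S, T)$ with $|T| \leq k \cdot \nicefrac{|S|}{n}$ witnesses a core violation under $\extord^{\max}$: every voter $i \in S$ strictly prefers $T$ over $W$, meaning $i$ has in $T$ a candidate ranked strictly above all of $W$. The key move is to pass to a single candidate. For each $i \in S$ pick such a witnessing candidate $t_i \in T$; this gives a map from $S$ into $T \setminus W$. By pigeonhole, some candidate $c \in T \setminus W$ is the witness for a subgroup $S' \subseteq S$ of size at least $|S| / |T| \geq |S| / (k \cdot \nicefrac{|S|}{n}) = \nicefrac{n}{k}$, hence $|S'| \geq \lceil\nicefrac{n}{k}\rceil$ since the size is an integer. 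Every voter in $S'$ ranks $c$ strictly above every member of $W$, so $(S', c)$ witnesses a violation of local stability for quota $\lceil\nicefrac{n}{k}\rceil$.

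The main obstacle, and the step requiring the most care, is the pigeonhole argument in the reverse direction: one must verify that the bound $|S'| \geq \nicefrac{n}{k}$ really does follow from $|T| \leq k \cdot \nicefrac{|S|}{n}$, and then correctly round up to the integer quota $\lceil\nicefrac{n}{k}\rceil$. A subtle point is that the $\extord^{\max}$ extension is exactly what makes this work: because a single top-ranked candidate certifies the preference, the deviation collapses cleanly to one candidate $c$, whereas under the lexicographic extension a voter might require several candidates of $T$ simultaneously and the collapse to a singleton would fail. I would make sure to flag this reliance on $\extord^{\max}$ explicitly.
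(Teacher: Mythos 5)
Your proof is correct and follows essentially the same route as the paper: one direction specializes the core condition to a singleton $T=\{c\}$, and the other uses an averaging/pigeonhole argument over the sets of voters witnessing each $c \in T$ (the paper's $R_c$) to extract a single candidate backed by at least $\nicefrac{n}{k}$ voters. Your version is in fact slightly more explicit than the paper's on two points it glosses over: the feasibility check $1 \leq k \cdot \nicefrac{|S|}{n}$ and the integrality step lifting $|S'| \geq \nicefrac{n}{k}$ to $|S'| \geq \lceil\nicefrac{n}{k}\rceil$.
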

\begin{proof}
The fact the core stability with $\extord^{\max}$ implies local stability is straightforward---local stability is a special case of the core condition for $|T|=1$. Now consider any election $E$ and a committee $W$ that is not core stability with $\extord^{\max}$.
Let $S\subseteq N$ and $T \subseteq C$ be the witness that $W$ is not in the core. For a candidate $c \in T$ let $R_c \in S$ denote a set of voters $i$ such that $c \succ_i W$. Since for every $i \in S$ there exists $c \in T$ with $i \in R_c$:
\begin{equation*}
    |S| \leq \sum_{c \in T} |R_c|
\end{equation*}

Hence, there exists a candidate $c \in T$ such that $|R_c| \geq \nicefrac{|S|}{|T|} \geq \nicefrac{n}{k}$. Yet, $R_c$ together with the candidate $c$ witness the violation of local stability, which completes the proof.
\end{proof}

\subsection{Open Questions}

In \Cref{sec:restricted_domains} we have shown that the classic committee election rules that are commonly considered proportional are not core-stable even if the voters' preferences come from certain restricted domains. Since these domains are natural and can be intuitively explained, one would expect a good rule to behave well for such well-structured elections. On the other hand, we often require a rule which is well-defined for all preference profiles. This leads us to the following important open question.

\begin{question}\label{q:is_there_a_rule}
Is there a natural voting rule that satisfies the strongest axioms of proportionality, and which at the same time satisfies the core for restricted domains.
\end{question}

The requirement that a rule should be ``natural'' says in particular that its definition cannot conditionally depend on whether the election at hand comes from a restricted domain or not. \Cref{q:is_there_a_rule} is valid for both approval and ordinal voters' preferences.

Additionally, it would be interesting to check how often the classic rules violate the core, especially in the case  of restricted domains. One can make such a quantitative comparison via experiments. This however raises the algorithmic questions of how hard it is to verify if a given committee (in our case the committee returned by the particular rule) belongs to the core. This question is easy for the $\extord^{\max}$ preference extension.
 
\begin{proposition}
There exists a polynomial-time algorithm for deciding whether a given committee belongs to the core for the $\extord^{\max}$ preference extension.
\end{proposition}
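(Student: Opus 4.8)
The plan is to avoid attacking \Cref{def:core} directly---a naive search over all deviating candidate sets $T$ would be infeasible---and instead exploit the equivalence established in the preceding proposition, namely that core-stability under $\extord^{\max}$ coincides with local stability for quota $\lceil \nicefrac{n}{k}\rceil$. Since local stability quantifies only over \emph{single} candidates, this collapses the decision from one ranging over exponentially many deviating sets to one ranging over the $m$ candidates.

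Concretely, I would first restate the target: a committee $W$ fails the core under $\extord^{\max}$ if and only if there is a candidate $c \in C \setminus W$ together with a group of at least $\lceil \nicefrac{n}{k}\rceil$ voters, each of whom strictly prefers $c$ to every member of $W$. For a fixed $c$ the largest such group is determined: it is exactly $D_c = \{v \in N : c \succ_v w \text{ for all } w \in W\}$. Hence $W$ lies in the core precisely when $|D_c| < \lceil \nicefrac{n}{k}\rceil$ for every candidate $c$, and the decision problem reduces to computing these $m$ counts.

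The algorithm then computes, for each voter $v$, the position of $v$'s most preferred member of $W$ in a single sweep of $\succsim_v$; the voter $v$ contributes to the count $|D_c|$ precisely for those $c$ ranked strictly above that position. Aggregating over all voters yields every $|D_c|$ in $O(nm)$ time, after which we compare each count against the quota and report a core violation if any count meets it, and ``in the core'' otherwise. Correctness is immediate from the equivalence, and every step is clearly polynomial.

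The only genuine content is the reduction step itself: one must be convinced that a blocking pair $(S,T)$ with $|T|$ possibly large can always be replaced by a single-candidate witness. This is exactly the averaging argument behind the preceding proposition---some $c \in T$ is strictly preferred to $W$ by at least a $\nicefrac{|S|}{|T|} \geq \nicefrac{n}{k}$ fraction of $S$---so no separate work is needed beyond invoking it, and I do not expect a real obstacle. The only subtleties to keep in mind are that preferences are weak orders, so ``$c \succ_v W$'' must be read as $c$ lying strictly above $v$'s top $W$-class, and that the relevant quota is the integer ceiling $\lceil \nicefrac{n}{k}\rceil$ rather than $\nicefrac{n}{k}$ itself.
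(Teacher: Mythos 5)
Your proposal is correct and takes essentially the same route as the paper's proof: both invoke the preceding equivalence between core-stability under $\extord^{\max}$ and local stability for quota $\lceil\nicefrac{n}{k}\rceil$, and then simply iterate over all candidates $c \in C \setminus W$, counting the voters who strictly prefer $c$ to every member of $W$ and comparing against the quota. Your write-up merely makes explicit the $O(nm)$ counting implementation and the strict-inequality threshold $|D_c| < \lceil\nicefrac{n}{k}\rceil$ that the paper's one-line proof leaves implicit.
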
 
\begin{proof}
Given a committee $W$ it is sufficient to iterate over all candidates $c \in C \setminus W$ and check if the number of voters who prefer $c$ over $W$ is no-greater than $\lceil\nicefrac{n}{k}\rceil$.
\end{proof}

However, for the lexicographic preference extension the question is much less obvious.

\begin{question}
What is the computational complexity of deciding whether a given committee belongs to the core (assuming the standard lexicographic preference extension)? 
\end{question}

This question is interesting in the general case, and as well as for each preference domain studied in this work.

\subsection*{Acknowldegments}
Grzegorz Pierczy\'nski and Piotr Skowron were supported by Poland's National Science Center grant UMO-2019/35/B/ST6/02215.

\bibliographystyle{plainnat}
\bibliography{core_linear}

\end{document}